\newtheorem{theorem}{Theorem}
\newtheorem{proof}{Proof}
 \newcounter{Q}
\lstdefinelanguage{CypherSQL}{
  morekeywords={MATCH, RETURN , DISTINCT, SELECT, 
  FROM, JOIN, AS, ON, UNION, INSERT, DROP, ALTER, 
  CREATE, END, LOOP, BEGIN, TABLE, RENAME,
  WHILE, EXISTS,TEMPORARY, RECURSIVE, VIEW},
  sensitive=false, 
  morecomment=[l]{//}, 
  morecomment=[s]{/*}{*/}, 
  morestring=[b]" 
} %
\lstdefinelanguage{EXEPlan}{
    morekeywords={HashAggregate, cost, rows, Group, Key, Planned,
    Partitions, Hash, Join, width, Cond, Seq, Scan, on, Merge,
    Index, Only, using, Sort},
    sensitive=false, 
    morecomment=[l]{//}, 
    morecomment=[s]{/*}{*/}, 
    morestring=[b] 
} %
\definecolor{codegreen}{rgb}{0,0.6,0}
\definecolor{codegray}{rgb}{0.5,0.5,0.5}
\definecolor{codepurple}{rgb}{0.58,0,0.82}
\definecolor{backcolour}{rgb}{1.0,1.0,1.0}
\definecolor{blu}{rgb}{0.0, 0.0, 1.0}
\definecolor{framegrey}{rgb}{0.52, 0.52, 0.51}
\lstdefinestyle{CypherSQL}{
    backgroundcolor=\color{backcolour},   
    commentstyle=\bfseries\color{codegreen},
    keywordstyle=\small\color{blu},
    numberstyle=\bfseries\small\color{black},
    stringstyle=\color{codepurple},
    basicstyle=\footnotesize\ttfamily,
    breakatwhitespace=true,         
    breaklines=false,                 
    captionpos=b,                    
    keepspaces=false,                 
    numbers=left,                    
    numbersep=1pt,
    xleftmargin=.5cm,                  
    showspaces=false,                
    showstringspaces=true,
    showtabs=false,                  
    tabsize=2
}
\lstdefinestyle{exeplan}{
    backgroundcolor=\color{backcolour},   
    commentstyle=\bfseries\color{codegreen},
    keywordstyle=\color{codepurple},
    numberstyle=\bfseries\small\color{black},
    stringstyle=\color{codepurple},
    basicstyle=\footnotesize\ttfamily, 
    columns=flexible,
    breakatwhitespace=true,         
    breaklines=true,                 
    captionpos=b,                    
    keepspaces=false,                 
    numbers=left,                    
    numbersep=1pt,
    xleftmargin=.5cm,                  
    showspaces=false,                
    showstringspaces=true,
    showtabs=false,                 
    tabsize=2
}
\newcommand{\Schema}[1]{\mathsf{S}_{#1}}
\newcommand{\Node}[1]{\mathsf{N}_{#1}}
\newcommand{\Edge}[1]{\mathsf{E}_{#1}}
\newcommand{\Property}[1]{\mathsf{P}_{#1}}
\newcommand{\Key}[1]{\mathsf{K}_{#1}}
\newcommand{\Type}[1]{\mathsf{T}_{#1}}
\newcommand{\Value}[1]{\mathsf{V}_{#1}}
\newcommand{\Label}[1]{\mathsf{L}_{#1}}
\newcommand{\ValueToType}[1]{\Upsilon_{#1}}
\newcommand{\Database}[1]{\mathsf{D}_{#1}}
\newcommand\edgelabelvar{l_e}
\newcommand\nodelabelvar{l_n}
\newcommand{\SchemaDatabaseMap}[1]{\mathcal{SD}_{#1}}
\newcommand*\sem[2][\Database{}]{\llbracket {#2} \rrbracket_{#1}}
\newcommand{\NodeLabeling}[1]{\eta_{#1}}
\newcommand{\EdgeLabeling}[1]{\xi_{#1}}
\newcommand{\PropertyLabeling}[1]{\Delta_{#1}}
\newcommand{\EdgeSrcTrgLabeling}[1]{\lambda_{#1}}
\newcommand\nodesch{\Node{\Schema{}}}
\newcommand\edgesch{\Edge{\Schema{}}}
\newcommand\nodelab{\NodeLabeling{\Schema{}}}
\newcommand\edgelab{\EdgeLabeling{\Schema{}}}
\newcommand\edgesrctrg{\EdgeSrcTrgLabeling{\Schema{}}}
\newcommand{\SchemaTriple}[1]{\mathcal{T}_{#1}}
\newcommand{\tripleSrc}[1]{\texttt{sc}_{#1}}
\newcommand{\tripleTrg}[1]{\texttt{tr}_{#1}}
\newcommand{\tripleEdgeLabel}[1]{\texttt{eT}_{#1}}
\newcommand{\CQTQuery}[1]{\mathsf{C}_{#1}}
\newcommand{\UCQTQuery}[1]{\mathsf{UC}_{#1}}
\newcommand{\PlusComp}[1]{\mathsf{PlC}_{#1}}
\newcommand\BasicSchemaTriples{\SchemaTriple{b}}
\newcommand\triplevar{t}
\newcommand*\SingleTriple[3]{(#1, #2, #3)}
\newcommand*\judgement[3]{\vdash_{#1} {#2}\colon {#3}}
\newcommand*\judg[2]{\judgement{\Schema{}}{#1}{#2}}
\newcommand*\typeOf[2]{\mathcal{T}_{#1}(#2)}
\newcommand*\typ[1]{\typeOf{\Schema{}}{#1}}
\newcommand*\annquery[1]{\psi_{#1}}
\newcommand*\annconcat[1]{/_{#1}}
\newcommand*\queryOf{\mathcal{Q}}
\newtheorem{definition}{Definition}
\newtheorem{example}{Example}
\newcommand{\concatenation}{/}
\newcommand*\branch[1]{[#1]}
\newcommand*\transclos[1]{{#1}^{+}}
\newcommand*\reverse[1]{-(#1)}
\newcommand{\union}{\cup}
\newcommand{\conjunction}{\cap}
\newcommand{\queryTerm}[1]{\phi_{#1}}
\newcommand{\variable}[1]{\mathcal{V}_{#1}}
\newcommand{\PathPatternSet}[1]{\mathcal{P}_{#1}}
\newcommand{\cqt}{\texttt{CQT}}
\newcommand{\ucqt}{\texttt{UCQT}}
\newcommand{\nre}{\texttt{NRE}}
\newcommand{\rpq}{\texttt{RPQ}}
\newcommand{\tworpq}{\texttt{2RPQ}}
\newcommand{\ctworpq}{\texttt{C2RPQ}}
\newcommand{\uctworpq}{\texttt{UC2RPQ}}
\newcommand{\cnre}{\texttt{CNRE}}
\newcommand{\ucnrpq}{\texttt{UCN2RPQ}}
\newcommand{\cq}{\texttt{CQ}}
\newcommand{\ucq}{\texttt{UCQ}}
\newcommand{\gxpath}{\texttt{GXPath}}
\newcommand{\mRA}{\mu\texttt{-RA}}
\newcommand{\srcCol}{\mathsf{Sr}}
\newcommand{\trgCol}{\mathsf{Tr}}
\newcommand{\muTerm}[1]{\varphi_{#1}}
\newcommand{\rename}[3]{\rho_{#1}^{#2}\left(#3\right)}
\newcommand{\proj}[2]{\pi_{#1}\left(#2\right)}
\newcommand{\NJoin}{\bowtie}
\newcommand{\Head}[1]{\mathsf{H}_{#1}}
\newcommand{\Body}[1]{\mathsf{B}_{#1}}
\newcommand{\Atomic}[1]{\mathsf{A}_{#1}}
\newcommand{\CQTRelation}[1]{\mathsf{Rel}_{#1}}
\definecolor{blu}{rgb}{0.0, 0.0, 1.0}
\newcommand{\SIGMODchange}[1]{{\color{black}#1}}
\newcommand{\exQuery}[1]{\mathcal{Q}_{#1}}
\title{Schema-Based Query Optimisation for Graph Databases}
\date{} 					
\author{Chandan Sharma, Pierre Genev\`es, Nils Gesbert, Nabil Laya\"ida\\
Tyrex team, Univ. Grenoble Alpes, CNRS, Inria, Grenoble INP, LIG\\
Grenoble, France, 38000
}
\begin{document}
\maketitle

\begin{abstract}
	Recursive graph queries are increasingly popular for extracting information from interconnected data found in various domains such as social networks, life sciences, 
	and business analytics. Graph data often come with schema information that describe how nodes and edges are organized. We propose a type inference mechanism that enriches 
	recursive graph queries with relevant structural information contained in a graph schema. We show that this schema information can be useful in order to improve the performance when 
	evaluating recursive graph queries. Furthermore, we prove that the proposed method is sound and complete, ensuring that the semantics of the query is preserved during the schema-enrichment process.	
\end{abstract}

\keywords{Graph Schema \and Query Optimisation \and Graph Databases \and Relational Algebra}

\section{Introduction}
\nobreak 

The creation, utilisation and, most importantly, analysis of highly interconnected data has become pervasive in various domains, including social media, astronomy, chemistry, bio-informatics, transportation networks and semantics associations (criminal investigation)~\cite{bell2009beyond,barrett2000formal,sheth2005semantic,yang2020graph}. Graph databases have become appealing for modeling, managing and analysing highly interconnected data~\cite{sharma2021practical,alotaibi2021property}. 

Discovering complex relationships between graph-structured data requires expressive graph query languages to use recursion to navigate paths connecting nodes in a graph database~\cite{jachiet2020optimization,duschka2000recursive}. 
Therefore, the design of contemporary graph query languages is based on formalisms such as~\emph{regular path queries} (\rpq{}),~\emph{two-way regular path queries} (\tworpq{}) and~\emph{nested regular expressions} (\nre{}) 
along with their extensions such as~\emph{conjunctive two-way regular path queries} (\ctworpq{})~\cite{calvanese1999rewriting},~\emph{union of conjunctive two-way regular path queries} (\uctworpq{})~\cite{bonifati2018querying,angles2017foundations},
~\emph{conjunctive nested regular expressions} (\cnre{}),~\emph{union of conjunctive nested two-way regular path queries} (\ucnrpq{})~\cite{barcelo2013schema,libkin2018trial,reutter2017regular}, XPath for graph databases (\gxpath{})~\cite{vrgoc2014querying,libkin2013querying} and more recently~\emph{conjunctive queries and union of conjunctive queries extended with Tarski's algebra} (\cqt{}/\ucqt{})~\cite{sharma2021practical}. 
These foundations form the basis of languages such as SPARQL~\cite{hogan2020sparql}, Cypher~\cite{francis2018cypher} and PGQL~\cite{van2016pgql}.
Furthermore, projects such as ISO/IEC 39075~\footnote{GQL:~\url{https://www.iso.org/standard/76120.html}} and Linked Data Benchmark Council~\footnote{LDBC:~\url{https://ldbcouncil.org}} (LDBC) are working towards creating a standard query language for graph databases. 
At the core of all these graph query languages is~\emph{recursion}.
It is essential to perform complex navigation and data extraction from the graph.

Furthermore, graph data often follow a certain organization, structural patterns, or even sets of constraints on the admissible graph shapes. Such knowledge may be left implicit or made explicit by the means of a so-called \textit{graph schema} specification.  In an ongoing standardization effort, \emph{PG-Schema}~\cite{angles2023pg} and~\emph{PG-Keys}~\cite{angles2021pg} have been proposed to serve as a reference graph schema language for graph databases.

One fundamental idea of the work presented in this paper is to leverage the schema constraints in order to improve query evaluation.
Intuitively, we conjecture that taking advantage of the constraints expressed in a schema can be useful to reduce the amounts of graph data involved when evaluating a query. This research proposes a schema-based query rewriting approach to optimize graph queries. We use a basic yet expressive graph schema formalism (based on~\cite{angles2023pg}) to express graph constraints. We consider graph queries expressed in the formalism of \ucqt{}. 
We propose a type inference method for injecting relevant schema information into the query and produce a schema-aware \ucqt{} variant that is semantically equivalent.  

The optimization of recursive queries (even independently from schema constraints) is already known to be significantly more challenging than in the non-recursive setting~\cite{nguyen2017estimating,yakovets2015waveguide, jachiet2020optimization}. Extensions to classical relational algebra have been proposed to support recursion~\cite{agrawal1988alpha,gomes2015beta,yakovets2015waveguide,leeuwen2022avantgraph}. As a result, many graph database engines use the query optimisation techniques developed for relational databases~\cite{sun2015sqlgraph,shanmugasundaram1999relational,jindal2014vertexica,xirogiannopoulos2017graphgen}. Numerous optimisation techniques have been developed~\cite{meimaris2017extended,neumann2011characteristic,bornea2013building,jachiet2020optimization} to optimise recursive graph queries independently from the schema. 

However, due to the schema optional nature of contemporary graph databases~\cite{sharma2022flasc}, earlier schema-based methods that have been researched in other settings for recursive queries have been largely ignored. Examples of such earlier methods include static query analyses for Datalog~\cite{bonatti2004decidability,calvanese1998decidability,calvanese2005decidable}, semi-structured~\cite{deutsch2001optimization,calvanese1998decidability,florescu1998query} databases and XML~\cite{benedikt2008xpath,lakshmanan2004testing,geerts2005satisfiability,geneves06} databases. While these works are essentially of theoretical nature, they can still bring useful insights for schema-based graph query analyses and optimization.

The main contributions of this research work are:

\paragraph{\underline{Schema-based approach for query rewriting}} We propose a type inference mechanism capable of leveraging structural information of a graph schema in order to rewrite \ucqt{} queries. In particular, the type inference mechanism enriches the edge label-based navigational graph queries with additional information related to node labels. This approach aims at reducing the size of intermediate subquery results in order to improve the overall query performance. The approach automatically optimises recursive graph queries and is capable of eliminating costly transitive closure operations using schema information. The soundness and completeness of type inference ensure that the semantics of the query is preserved during optimization.

\paragraph{\underline{Prototype implementation}} We have developed a prototype implementation to demonstrate the practical application of the schema-based approach. The prototype translates schema-based rewritten queries into graph patterns for graph databases and into recursive relational algebra for relational databases. We demonstrate the efficiency of the approach on graph and relational database systems.

\paragraph{\underline{Experimental evaluation}} In order to empirically evaluate the effectiveness of the approach, we use the LDBC social network benchmark dataset which is a property graph, and the YAGO dataset which is a knowledge graph. We use third-party recursive and non-recursive queries. Experiments show that the proposed schema-based approach is effective, especially for recursive queries. 

\paragraph{\textbf{Organisation:}} We present preliminary concepts related to graph databases in Sec.~\ref{sec:PropertyGraphs}, including schemas and the \ucqt{} graph query language. The main contribution to schema-based query rewriting is presented 
in Sec.~\ref{sec:logicalQueryOpt}, where we present an inference system to enrich queries with annotations deduced from the schema. The prototype implementation is presented in Sec.~\ref{sec:Implementation}, where we describe the 
overall system architecture. We then report on empirical evaluation results in Sec.~\ref{sec:experiments}, where we also consider several (relational and graph-based) database systems.  Finally, we discuss related works in Sec.~\ref{sec:RelatedWork} before concluding in Sec.~\ref{sec:Conclusion}. 

\section{Graph Databases}\label{sec:PropertyGraphs}
\nobreak 

We present a few basic definitions related to graph databases. In particular, we define the notions of graph schema, graph database, schema-database consistency and graph queries. 
These notions are vital for schema-based query rewriting. Furthermore, we use the graph database 
representation of the YAGO dataset as a running example to illustrate various definitions present in subsequent sections.

\subsection{Graph Schema}\label{subsec:PGSechma}
\nobreak

A \emph{graph schema} is a directed pseudo multigraph: a graph in which loops on nodes and multiple directed edges between two nodes are permitted. A graph schema captures the structural and properties-based restrictions in a graph database, with nodes representing entities and edges representing relationships between entities. In graph schemas, nodes and edges are labeled, and properties associated with nodes are expressed in the form of \emph{key-type} pairs. Let $\Label{\Node{}}$ be a finite set of node labels and $\Label{\Edge{}}$ be 
a finite set of edge labels such that $\Label{\Node{}} \cap \Label{\Edge{}} = \emptyset$. 
Let $\Key{\Schema{}}$ be a set of 
keys (for example: id, name, age), and $\Type{}$ be a finite set of data types (for example: String, Integer, Date). We define a finite set of properties 
$\Property{\Schema{}}$ such that $\Property{\Schema{}} \subseteq (\Key{\Schema{}} \times \Type{})$. 

\begin{figure}[htbp]
  \centering
  \small 
  \resizebox{.86\linewidth}{!}{%
   \begin{tikzpicture}%
    [>=stealth,
     shorten >=1pt,
     node distance=4cm,
     on grid,
     auto,
     every state/.style={draw=black!100, thick}
    ]
  \node[state,fill=green!5]   (n10)                 {\begin{tabular}{c}  PERSON \\ name:String\\age:Int \end{tabular}};
  \node[state,fill=blue!15]   (n12) [right=of n10]  {\begin{tabular}{c}  CITY \\ name:String \end{tabular}};
  \node[state, fill=red!5]    (n11) [left=of n10]  {\begin{tabular}{c}  PROPERTY \\ address:String \end{tabular}};
  \node[state,fill=red!15]    (n13) [right=of n12]  {\begin{tabular}{c}  REGION \\ name:String \end{tabular}};
  \node[state,fill=green!15]  (n14) [right=of n13]  {\begin{tabular}{c}  COUNTRY \\ name:String \end{tabular}};
  
  \path[->]
   (n10)          edge[loop above]                  node                   {\begin{tabular}{c}  isMarriedTo \end{tabular}}                     (n10)
   (n10)          edge                              node                   {\begin{tabular}{c}  livesIn  \end{tabular}}                        (n12)
   (n10)          edge                              node[swap]             {\begin{tabular}{c}   owns   \end{tabular}}                         (n11)
   (n11)          edge[bend right=45]               node                   {\begin{tabular}{c}   isLocatedIn  \end{tabular}}                   (n12)
   (n12)          edge[bend right=0]               node[swap]             {\begin{tabular}{c}  isLocatedIn \end{tabular}}                     (n13)
   (n13)          edge[bend right=0]               node[swap]             {\begin{tabular}{c}  isLocatedIn  \end{tabular}}                    (n14)
   (n14)          edge[loop above]                  node                   {\begin{tabular}{c}  dealsWith  \end{tabular}}                      (n14)
     ;
  \end{tikzpicture}
  }
  \caption{Sample graph schema for YAGO dataset.}
  \label{fig:YAGOSchema}
\end{figure}

\begin{definition}[Graph Schema]\label{def:PGSchema}
  A graph schema is a tuple $\Schema{} = (\Node{\Schema{}}, \Edge{\Schema{}}, \Label{\Node{}}, \Label{\Edge{}}, \Property{\Schema{}}, \EdgeSrcTrgLabeling{\Schema{}}, 
\NodeLabeling{\Schema{}},\EdgeLabeling{\Schema{}}, \PropertyLabeling{\Schema{}})$ where,

\begin{itemize}
   \item $\Node{\Schema{}}$ and $\Edge{\Schema{}}$ are finite set of nodes and edges such that $\Node{\Schema{}} \cap \Edge{\Schema{}} = \emptyset$.
   
   \item $\EdgeSrcTrgLabeling{\Schema{}}:\Edge{\Schema{}} \rightarrow \Node{\Schema{}} \times \Node{\Schema{}}$ is a function which maps all edges to source and target nodes.
   
   \item $\NodeLabeling{\Schema{}} : \Node{\Schema{}} \rightarrow \Label{\Node{}}$ is a function which maps all nodes to the set of node labels. 
   
   \item $\EdgeLabeling{\Schema{}} : \Edge{\Schema{}} \rightarrow \Label{\Edge{}}$ is a function which maps all edges to the set of edge labels. 
   
   \item $\PropertyLabeling{\Schema{}} : \Node{\Schema{}}\rightarrow 2^{\Property{\Schema{}}}$ is a function which maps all nodes to all possible subsets of the property set $\Property{\Schema{}}$.  
\end{itemize}
\end{definition}

\begin{example}\label{examp:PGSchema}
 Fig.~\ref{fig:YAGOSchema} shows a graph schema for the YAGO dataset \cite{DBLP:journals/ai/HoffartSBW13} consisting of five nodes and seven edges. All edges are labeled and directed; some edges can represent loops on the same nodes. 
 For instance, the edge labeled as \texttt{isMarriedTo} has the same source and target node labeled as \texttt{PERSON}. All nodes are labeled and have properties associated with them. For instance, \SIGMODchange{a node is labeled as \texttt{REGION}} 
 and has a property \texttt{name:String} as a \emph{key:type} pair. 
 
 \end{example}

\subsection{Graph Database}\label{subsec:PGDatabase}
\nobreak

A graph database is a graph instance that allows modeling real-world entities as labeled nodes and edges, with properties associated with nodes. Let $\Key{\Database{}}$ be an infinite set of 
keys (for example, id, name, age), $\Value{}$ be a finite set of values (for example, 345, \texttt{James}). We define a function $\ValueToType{} : \Value{} \rightarrow \Type{}$ that 
maps values in $\Value{}$ to their respective data types in $\Type{}$. The set of properties associated with the nodes of a graph database are defined as $\Property{\Database{}}$ such that $\Property{\Database{}} \subseteq (\Key{\Database{}} \times \Value{})$ where each $p_d \in \Property{\Database{}}$
is a \emph{key-value} pair and each value has a data type. Formally, a graph database is defined as follows:

\begin{definition}[Graph database]\label{def:PGDatabase}
  A graph database is a tuple $\Database{} = (\Node{\Database{}}, \Edge{\Database{}}, \Label{\Node{}}, \Label{\Edge{}}, 
  \Property{\Database{}}, \EdgeSrcTrgLabeling{\Database{}}, 
  \NodeLabeling{\Database{}},\EdgeLabeling{\Database{}}, \PropertyLabeling{\Database{}})$ where,
  \begin{itemize}
      \item $\Node{\Database{}}$ and $\Edge{\Database{}}$ are finite set of nodes and edges such that $\Node{\Database{}} \cap \Edge{\Database{}} = \emptyset$. 
      \item $\EdgeSrcTrgLabeling{\Database{}}:\Edge{\Database{}} \rightarrow \Node{\Database{}} \times \Node{\Database{}}$ is a function which maps all edges to source and target nodes.
      \item $\NodeLabeling{\Database{}} : \Node{\Database{}} \rightarrow \Label{\Node{}}$ is a function which maps all nodes to the set of node labels. 
      \item $\EdgeLabeling{\Database{}} : \Edge{\Database{}} \rightarrow \Label{\Edge{}}$ is a function which maps all edges to the set of edge labels. 
      \item $\PropertyLabeling{\Database{}} : \Node{\Database{}} \rightarrow 2^{\Property{\Database{}}}$ is a function which maps all nodes to all possible subsets of the property set $\Property{\Database{}}$.
  \end{itemize}
\end{definition}

\begin{figure}[htbp]
  \tiny
  \centering
  \resizebox{.7\linewidth}{!}{%
   \begin{tikzpicture}%
    [>=stealth,
     shorten >=1pt,
     node distance=2.5cm,
     on grid,
     auto,
     every state/.style={draw=black!100, thick}
    ]
  \node[state,fill=red!5, scale=.8]   (n1)                     {\begin{tabular}{c} $n_1$\\ PROPERTY \\ address:\\7 Queen \\ Street \end{tabular}};
  \node[state, fill=green!5, scale=.8]    (n2) [right=of n1]      {\begin{tabular}{c} $n_2$\\PERSON \\ name:John \\ age:28 \end{tabular}};
  \node[state, fill=green!5, scale=.8]    (n3) [right=of n2]      {\begin{tabular}{c} $n_3$\\PERSON \\ name:\\Shradha \\ age:25 \end{tabular}};
  \node[state, fill=blue!15, scale=.8]    (n4) [right=of n3]    {\begin{tabular}{c} $n_4$\\CITY \\ name:\\Elerslie \end{tabular}};
  \node[state,fill=red!15, scale=.8]    (n5) [below=of n2]      {\begin{tabular}{c} $n_5$\\REGION \\ name:\\ Grenoble \end{tabular}};
  \node[state,fill=blue!15, scale=.8]   (n6) [below=of n1]      {\begin{tabular}{c} $n_6$\\CITY \\ name:\\ Montbonnot \end{tabular}};
  \node[state,fill=green!15, scale=.8]   (n7) [below=of n4]      {\begin{tabular}{c} $n_7$\\COUNTRY \\ name:\\ France \end{tabular}};
  
  \path[->]
      (n2)           edge[bend left=18]                                     node                        {\begin{tabular}{c}  isMarriedTo \end{tabular}}                            (n3)
      (n3)           edge[bend left=18]                       node                        {\begin{tabular}{c}  isMarriedTo  \end{tabular}}                            (n2)
      (n3)           edge                                     node                        {\begin{tabular}{c}  livesIn  \end{tabular}}                                (n4)
      (n2)           edge[bend right=45]                       node                  {\begin{tabular}{c}  livesIn \end{tabular}}                                (n4)
      (n2)           edge                                     node[swap]                  {\begin{tabular}{c}  owns \end{tabular}}                                  (n1)
      (n4)           edge[bend left=25]                       node[swap]                        {\begin{tabular}{c}  isLocatedIn  \end{tabular}}                                               (n5)
      (n1)           edge                                     node                        {\begin{tabular}{c}  isLocatedIn \end{tabular}}                                                (n6)
      (n6)           edge[bend right=20]                                     node[swap]                  {\begin{tabular}{c}  isLocatedIn \end{tabular}}                                                (n5)
      (n5)           edge[bend right=10]                                     node[swap]                  {\begin{tabular}{c}  isLocatedIn \end{tabular}}                                                (n7)    
     ;
  \end{tikzpicture}
  }
  \caption{An example of YAGO graph database.}
  \label{fig:YAGODatabase}
\end{figure}

\begin{example}\label{examp:YAGODatabase}
    Fig.~\ref{fig:YAGODatabase} shows an example of a YAGO graph database consisting of seven nodes and nine edges. All nodes are labeled, and have optional properties associated with them. All edges are labeled and can be identified by unique source and target node identifiers using function $\EdgeSrcTrgLabeling{\Database{}}$. For instance, edge $(n_2, n_1)$ is labeled as \texttt{owns}, with $n_2$ as the source and $n_1$ as the target node identifier. The node with identifier $n_{2}$ is labeled as \texttt{PERSON} and has \texttt{name:John} and \texttt{age:28} as properties. 
    
\end{example}

\subsection{Schema-database consistency}\label{subsec:SDConsistency}
\nobreak 
The notion of schema-database consistency implies that a graph database follows the structural and property based restrictions established by a graph schema. We define a  function $\SchemaDatabaseMap{}: \Database{} \rightarrow \Schema{}$ that maps elements in the graph database to at most one element in the graph schema (also known as strict graph schema~\cite{angles2023pg}).

\begin{definition}[Schema database consistency]\label{def:SDConsistency}
  Given a graph database $\Database{}$ and a graph schema $\Schema{}$,
  we say that $\Database{}$ is
  \emph{consistent} with $\Schema{}$ when there exists a schema-database
  mapping $\SchemaDatabaseMap{}$ such that:
    \begin{itemize}
        \item For every node $n_i \in \Database{}$ there exists a
          corresponding node in the schema:
          $\SchemaDatabaseMap{}(n_i) \in \Schema{}$, and we have $\NodeLabeling{\Database{}}(n_i) = \NodeLabeling{\Schema{}}(\SchemaDatabaseMap{}(n_i))$.      
        
        \item For every edge $e_i \in \Database{}$ there exists a
          corresponding edge in the schema: $
          \SchemaDatabaseMap{}(e_i) \in \Schema{}$. Let
          $(n_i,n_j) = \EdgeSrcTrgLabeling{\Database{}}(e_i)$, then we have
        $\EdgeSrcTrgLabeling{\Schema{}}(\SchemaDatabaseMap{}(e_i)) =
        (\SchemaDatabaseMap{}(n_i), \SchemaDatabaseMap{}(n_j))$;
        furthermore, we have $\NodeLabeling{\Database{}}(n_i) = \NodeLabeling{\Schema{}}(\SchemaDatabaseMap{}(n_i))$,
        $\NodeLabeling{\Database{}}(n_j) = \NodeLabeling{\Schema{}}(\SchemaDatabaseMap{}(n_j))$ and $\EdgeLabeling{\Database{}}(e_i) = \EdgeLabeling{\Schema{}}(\SchemaDatabaseMap{}(e_i))$.
        
        \item For each $n_i \in \Node{\Database{}}$ and for each $(k,
          v) \in \PropertyLabeling{\Database{}}(n_i)$, we have $(k, 
          \ValueToType{}(v))\in
          \PropertyLabeling{\Schema{}}(\SchemaDatabaseMap{}(n_i))$.
    \end{itemize}
\end{definition}

\begin{example}\label{examp:SIConsistency}
    By using Def.~\ref{def:SDConsistency}, we can observe that the graph database presented in Fig.~\ref{fig:YAGODatabase} is consistent with the graph schema shown in Fig.~\ref{fig:YAGOSchema}. For instance, edge ($n_2, n_3$) is labeled as \texttt{isMarriedTo}; moreover, source and target nodes are labeled as \texttt{PERSON} and follow the property-based restrictions imposed in the YAGO graph schema.  
    
\end{example}

In this study, we only consider graph databases that are consistent with the graph schema; that is, we exclude any database that does not conform to Def.~\ref{def:SDConsistency}. 
Furthermore, we consider graph databases with restrictions, including \SIGMODchange{each node/edge can have at most one node/edge label associated with it, and edges do not have properties associated with them. However, as mentioned in~\cite{vrgoc2014querying} 
our graph data model can easily be modified to accommodate properties over edges. In our graph data model we allow zero or more properties associated with nodes.}
Moreover, we consider that properties are atomic entities and cannot have maps nor lists as data types\footnote{Interested readers may refer to~\cite{angles2020mapping,sharma2021practical,sharma2022flasc} for detailed discussions on graph data model restrictions.}.

\subsection{Querying graph databases}\label{sec:queryingPG}
\nobreak 

\emph{Graph patterns} are essential in graph query languages as they assist in defining the structure of data to be extracted from a graph database~\cite{angles2017foundations,francis2023gpc,sharma2021practical}. 

\label{subsec:GraphPattern}

The core component of a graph pattern is a~\emph{path expression} corresponding to specifying directed edges and/or paths defined over the edge labels of a graph database. 
To specify path expressions we use the formalism of~\emph{Tarski's algebra} which is strictly more expressive than other graph query formalisms 
such as~\emph{two-way regular path queries} ($\tworpq$) and~\emph{nested regular expressions} ($\nre$)~\cite{hellings2018tarski,hellings2017relation}. 
The grammar of Tarski's algebra used to formulate a path expression $\queryTerm{}$  is presented in Fig.~\ref{fig:Tarski}. 

\begin{figure}[htbp] 
    \footnotesize
    \begin{align*} 
        \queryTerm{} ::= && && \text{path expression} \\ 
         && \SIGMODchange{l_e}  && \SIGMODchange{\text{single edge label } (l_e \in \Label{\Edge{}})}  \\
        | && \queryTerm{1} \concatenation \queryTerm{2} && \text{concatenation} \\
        | && \queryTerm{1} \union \queryTerm{2} && \text{union} \\
        |&&  \queryTerm{1} \conjunction \queryTerm{2} && \text{conjunction} \\ 
        |&&  \queryTerm{1} [\queryTerm{2}] && \text{branch (right)}\\
        | && [\queryTerm{1}] \queryTerm{2} && \text{branch (left)} \\
        | && -l_e && \text{reverse}\\
        | && \queryTerm{}^{+} && \text{transitive closure}
    \end{align*}
     
    \caption{Tarski's algebra grammar (adapted from~\cite{hellings2018tarski,sharma2021practical}).}
    \label{fig:Tarski}
\end{figure}

In our adaptation of Tarski's algebra, the~\emph{reverse operation} can be used with single-edge labels, as shown in Fig.~\ref{fig:Tarski}. Notice that it is possible to use the reverse operator 
in front of general path expressions $\queryTerm{}$, as this does not offer any additional expressive power compared to the reverse operation on single-edge labels~\cite{hellings2018tarski}. 

\SIGMODchange{In the sequel, we define and use the formalism of \emph{conjunctive queries and union of conjunctive queries extended with Tarski's algebra} (\cqt/\ucqt) to express graph patterns based on path expressions. We choose the \ucqt{} query language formalism for its high expressive power as it encompasses most existing query language formalisms. In particular, \ucqt{} subsumes both \emph{union of conjunctive two-way regular path queries} (\uctworpq{}) 
and~\emph{union of conjunctive nested two-way regular path queries} (\ucnrpq{}) which are used in many practical graph query languages such as Cypher, SPARQL and PGQL~\cite{sharma2021practical,vrgoc2014querying,reutter2017regular}.

Additionally, in \ucqt{}, the conjunction operator is closed under transitive closure~\cite{sharma2021practical,hellings2018tarski}, which adds the ability to express recursive queries that search for 
arbitrary-length paths with repeating cyclic structures between the start and end nodes. Such queries cannot be expressed in the formalisms of \uctworpq{} and \ucnrpq{}~\cite{sharma2021practical}.

}

\subsubsection{Syntax of $\cqt/\ucqt$}\label{subsubsec:cqtSyntax}
\nobreak 

Given a graph schema $\Schema{}$ and a graph database $\Database{}$, let $\variable{\Node{}}$ be a finite set of node variables and $\PathPatternSet{\queryTerm{}}$ be a set of path expressions $\{\queryTerm{1},\ldots,\queryTerm{n}\}$, such that each path expression $\queryTerm{i} \in \PathPatternSet{\queryTerm{}}$ is defined over the set of edge labels $\Label{\Edge{}}$. 

\begin{definition}[$\cqt$]\label{def:cqt}
    A conjunctive query with Tarski's algebra is a logical formula in the $\exists, \wedge$-fragment of first order logic, of the form $\CQTQuery{}$ = \big\{($h_1,\ldots,h_i$)  $\vert \hspace{.4em} \exists$ $(b_1,$ $\ldots,b_j) \hspace{.4em} r_1 \wedge \ldots \wedge r_l \wedge \hspace{.4em} a_1 \wedge \ldots \wedge a_k$  \big\} where,

    \begin{itemize}
        \item $\Head{} =$ \SIGMODchange{\{$h_1,\ldots,h_i$\}} is a finite set of head variables. $\Body{} =$ \SIGMODchange{\{$b_1,\ldots,b_j$\}} is a finite set of body variables such that $(\Body{} \cup \Head{}) \subseteq \variable{\Node{}}$ and \SIGMODchange{$(\Body{} \cap \Head{}) = \emptyset$.} 
        \item $\Atomic{} = $\SIGMODchange{ \{$a_1,\ldots,a_k$\}} is a finite set of atomic formulas formed by a labeling function $\NodeLabeling{\Atomic{}}: \variable{\Node{}} \rightarrow (\Label{\Node{}} \cup \epsilon)$ that maps all node variables
        to node labels in $\Label{\Node{}}$ or to the empty label. These formulas can be of the form \emph{e.\ g.}\ $\NodeLabeling{\Atomic{}}(Y) = \texttt{PERSON}$ to specify that nodes represented by the variable $Y$ must be labeled as \texttt{PERSON}.
        \item $\CQTRelation{} \subseteq (\variable{\Node{}} \times \PathPatternSet{\queryTerm{}} \times \variable{\Node{}})$ is a finite set of relations where each relation $r_i \in \CQTRelation{}$ either represent directed edge(s) or path(s) connecting two nodes.
    \end{itemize}
\end{definition}

\begin{figure}[htbp]
    \centering
    \tiny
    \resizebox{.4\linewidth}{!}{%
     \begin{tikzpicture}%
      [>=stealth,
       shorten >=1pt,
       node distance=2cm,
       on grid,
       auto,
       every state/.style={draw=black!100, very thick}
      ]
    \node[state,fill=red!5 ]       (z)              {\begin{tabular}{c} Z  \end{tabular}};  
    \node[state, fill=green!5]     (y) [right=of z] {\begin{tabular}{c} Y  \end{tabular}};
    \node[state, fill=green!15]    (m) [right=of y] {\begin{tabular}{c} M \end{tabular}};

    \path[->]
       (y)         edge                node[swap]   {\begin{tabular}{c} owns \end{tabular}}    (z)
       (y)         edge[bend left=30]                node         {\begin{tabular}{c} livesIn$\concatenation$isLocatedIn$^{+}$\end{tabular}}    (m)       
       ;
    \end{tikzpicture}
    }
     
    \caption{Example of a graph pattern.}
    \label{fig:graphPattern}
\end{figure}

\begin{example}
  The graph pattern in Fig.~\ref{fig:graphPattern} identifies people living in regions and countries who also own properties. 
  It consists of two relations:  (\texttt{Y}, \texttt{owns}, \texttt{Z}) searches for people who own properties. While (\texttt{Y}, \texttt{livesIn$\concatenation$isLocatedIn$^{+}$}, \texttt{M}) 
  specifies a path expression to \SIGMODchange{search} for paths that have edges labeled as \texttt{livesIn} followed by an unbounded number of edges labeled as \texttt{isLocatedIn} returning node that either correspond to regions or countries. 
  Both relations share the same node variable \texttt{Y}, specifying that both relations must search for the same person.        
\end{example}

\small 
\begin{align*}
    \CQTQuery{1} = \{\texttt{Y} \hspace{.7em} \vert \hspace{.7em} \exists (\texttt{Z,M}) \hspace{.7em} (\texttt{Y,(livesIn\concatenation{}isLocatedIn+)},\texttt{M}) \wedge (\texttt{Y,owns,Z}) \}
\end{align*}

\normalsize

\begin{example}
    Query $\CQTQuery{1}$, presents a graph pattern in Fig.~\ref{fig:graphPattern} expressed as a $\cqt$ where \texttt{Z,M} are body variables and \texttt{Y} is a head variable. Relations $(\texttt{Y,owns,Z})$ and $(\texttt{Y,livesIn\concatenation{}isLocatedIn}^{+},\texttt{M})$ describe the structure of the graph pattern. 
\end{example}

We extend the $\cqt$ query language formalism to the formalism of the union of conjunctive queries with Tarski's algebra ($\ucqt$) which is analogous to 
extending the formalism of $\ctworpq$ to $\uctworpq$. The formalism of $\ucqt$ represents the disjunction of conjunctive queries with 
Tarski's algebra. A $\ucqt$ query is written as $\UCQTQuery{} = \Head{}\leftarrow\{\CQTQuery{1}\union \ldots \union \CQTQuery{n}\}$, where all $\CQTQuery{i}$ are \emph{union compatible} $\cqt$ queries, that is, they share the same set of head variables $\Head{}$~\cite{perez2006semantics,jachiet2020optimization}.

\subsubsection{Semantics of $\cqt/\ucqt$}\label{subsubsec:cqtSyntax}
\nobreak

As discussed in Sec.~\ref{subsec:GraphPattern}, path expressions form the core components for syntactically describing graph patterns as $\cqt/\ucqt$. We first present the semantics of path expressions $\queryTerm{}$ that are based on Tarski's algebra~\footnote{Interested readers may refer~\cite{sharma2021practical} for detailed discussion on the semantics of $\cqt/\ucqt$}. 
The interpretation of the path expression $\queryTerm{}$ when evaluated over a graph database $\Database{}$ (represented as $\llbracket \queryTerm{}\rrbracket_{\Database{}}$) is defined in Fig.~\ref{fig:intercqtGrammar}. 
The output produced after evaluating a path expression $\queryTerm{}$ consists of all pairs of nodes (source and target nodes) that are connected by the path $\queryTerm{}$ in the graph database.

\begin{figure}[htbp]
    \centering
    \footnotesize
    \begin{align*}
        &&\llbracket l_e \rrbracket_{\Database{}} =&&  \{(n,m) | n \xrightarrow{l_e} m \in \Edge{\Database{}} 
        \wedge n,m \in \Node{\Database{}} \wedge l_e \in \Label{\Edge{}} \}\\
        &&\llbracket \queryTerm{1} \concatenation \queryTerm{2} \rrbracket_{\Database{}} =&& \{(n,m) | \exists z \in \Node{\Database{}} 
        \wedge (n,z) \in \llbracket \queryTerm{1} \rrbracket_{\Database{}} 
        \wedge (z,m) \in \llbracket \queryTerm{2} \rrbracket_{\Database{}}\} \\
        &&\llbracket \queryTerm{1} \union \queryTerm{2} \rrbracket_{\Database{}} =&& \llbracket \queryTerm{1} \rrbracket_{\Database{}}
        \union \llbracket \queryTerm{2} \rrbracket_{\Database{}} \\
        &&\llbracket \queryTerm{1} \conjunction \queryTerm{2} \rrbracket_{\Database{}} =&& \llbracket \queryTerm{1} \rrbracket_{\Database{}}
        \conjunction \llbracket \queryTerm{2} \rrbracket_{\Database{}} \\
        &&\llbracket \queryTerm{1}[\queryTerm{2}] \rrbracket_{\Database{}} =&& \{(n,m) | \exists z \in \Node{\Database{}} 
        \wedge (n,m) \in \llbracket \queryTerm{1} \rrbracket_{\Database{}}
        \wedge (m,z) \in \llbracket \queryTerm{2} \rrbracket_{\Database{}} \}\\
        &&\llbracket [\queryTerm{1}]\queryTerm{2} \rrbracket_{\Database{}} =&& \{(n,m) | \exists z \in \Node{\Database{}} 
        \wedge (n,z) \in \llbracket \queryTerm{1} \rrbracket_{\Database{}}
        \wedge (n,m) \in \llbracket \queryTerm{2} \rrbracket_{\Database{}} \} \\
        &&\llbracket -l_e \rrbracket_{\Database{}} =&& \{(m,n) | (n,m) \in \llbracket l_e \rrbracket_{\Database{}}\}\\
        &&\llbracket \queryTerm{}^{+} \rrbracket_{\Database{}} =&& \bigcup \limits_{i \geq 1}^{} \llbracket \queryTerm{}^{i} \rrbracket_{\Database{}} 
        , \queryTerm{}^{k} = (\underbrace{\queryTerm{} \concatenation \ldots \concatenation \queryTerm{}}_{k-\text{times}}) \text{ where } 1 \leq \SIGMODchange{k} \leq i.  
    \end{align*}
     
    \caption{Semantics of Tarski's algebra (adapted from~\cite{sharma2021practical}).}
    \label{fig:intercqtGrammar}
\end{figure}

\SIGMODchange{
 \begin{example}\label{example:PathExpression}
    Given a graph schema (Fig.~\ref{fig:YAGOSchema}) and a graph database (Fig.~\ref{fig:YAGODatabase}) for the YAGO dataset, consider $\queryTerm{1} = [\texttt{owns}]([\texttt{isMarriedTo}]\texttt{livesIn})$ a 
    path expression used to search for all married property owners living in cities. The path expression $\queryTerm{1}$ first searches for people living in cities using the $\texttt{livesIn}$ edge label. 
    After that, the branching operation on $\texttt{isMarriedTo}$ edge label only serves as an existential node test to select married people. Finally, the branching operation on the $\texttt{owns}$ edge 
    label only selects married people living in cities and owning properties. 
    
    Based on the semantics in Fig.~\ref{fig:intercqtGrammar}, the query returns nodes \{($n_2, n_4$)\} as the final result set, where node $n_2$ represents a person ``\texttt{John}'' who lives in a city named ``\texttt{Elerslie}'' (represented by node $n_4$). Furthermore, \texttt{John} is married and owns a property as shown in Fig.~\ref{fig:YAGODatabase}.   
   \end{example}
} 
 The formalism of $\cqt/\ucqt$ expresses queries of two types:~\emph{non-recursive graph queries (NQ)} and~\emph{recursive graph queries (RQ)}. 
 Non-recursive graph queries are restricted $\cqt/\ucqt$ that do not allow transitive closure, whereas recursive graph queries allow general path expressions with transitive closure. 
 
 The semantics of graph query language formalisms are broadly of two types~\emph{(i)} evaluation semantics and \emph{(ii)} output semantics~\cite{angles2017foundations,sharma2021practical}. The formalism of $\cqt/\ucqt$ uses homomorphism-based evaluation semantics for non-recursive graph queries (NQ), arbitrary path semantics for recursive graph queries (RQ) and set-based output semantics~\cite{sharma2021practical}.

\section{Schema-Based Query Rewriting}\label{sec:logicalQueryOpt}
\nobreak 

We introduce a method that leverages schema information for query evaluation. Specifically the method rewrites a query so that structural schema information is injected in relevant part of the queries, namely in path expressions. As a preliminary step, we first transform path expressions into a form where redundancies are eliminated. We then present how schema information can be injected into path expressions.

\paragraph{\underline{Preliminary path simplifications}} 
The purpose of rewrite rules presented in Fig.~\ref{fig:simpleRWrules} is to eliminate redundancies from path expressions in order to simplify queries. These rewrite rules are general in that they apply 
independently from any particular schema.

\begin{figure}[htbp]
  \centering
  \footnotesize
  \begin{align*}
      (\queryTerm{}^{+})^{+} \rightarrow \queryTerm{}^{+} \text{ \texttt{(R1)}} && \queryTerm{1}^{+} [\queryTerm{2}^{+}] \rightarrow \queryTerm{1}^{+}[\queryTerm{2}] \text{ \texttt{(R2)}}  && \queryTerm{1}[\queryTerm{2} \concatenation \queryTerm{3}] \rightarrow \queryTerm{1}[\queryTerm{2}[\queryTerm{3}]] \text{ \texttt{(R3)}}\\
      &&  [\queryTerm{2}^{+}] \queryTerm{1}^{+} \rightarrow [\queryTerm{2}]\queryTerm{1}^{+} \text{ \texttt{(R4)}} && [\queryTerm{2} \concatenation \queryTerm{3}]\queryTerm{1} \rightarrow [\queryTerm{2}[\queryTerm{3}]]\queryTerm{1} \text{ \texttt{(R5)}} 
  \end{align*}  
   
  \caption{Rewrite rules for path expression simplification.}
  \label{fig:simpleRWrules}
\end{figure}

Rule \texttt{R1} removes the redundant use of transitive closures. 

Rules \texttt{R2} and \texttt{R4} simplify transitive closures within the branching operator since computing full transitive closure in branching expressions is not required~\cite{hellings2018tarski,hellings2017relation}. 
Rules \texttt{R3} and \texttt{R5} turn path compositions into branching operations when possible, as also done in prior works \cite{hellings2017relation,hellings2018tarski,hellings2020comparing,hellings2023expressive,jaakkola2023complexity}. 
Correctness of these rules follows immediately from the formal semantics of path expressions, and in particular from the existential semantics of the branching operator defined in Figure~\ref{fig:intercqtGrammar}.

\begin{figure}[htbp]
    \centering
    \footnotesize 
    \begin{align*}
      \queryTerm{red} =     && \texttt{(((owns[isMarriedTo}^{+}\texttt{/livesIn/dealsWith}^{+}\texttt{])/(isLocatedIn}^{+}\texttt{)}^{+}\texttt{)}^{+}\texttt{)}^{+}\\
      \queryTerm{opt} =     && \texttt{((owns[isMarriedTo[livesIn[dealsWith]]]/isLocatedIn}^{+}\texttt{)}^{+}
    \end{align*}
     
    \caption{Application of path simplification rules.}
    \label{fig:simpleRW}
\end{figure}

\begin{example}
    In Fig~\ref{fig:simpleRW}, $\queryTerm{red}$ is a path expression with redundant plus operations and uses the concatenation operation within the branched path expressions. Rule \texttt{R1} removes redundant plus operations, rule \texttt{R2} removes plus operation inside branched expressions, and rule \texttt{R3} replaces concatenation with branching operation in branched expressions. The optimised path expression is presented in Fig.~\ref{fig:simpleRW} as the path expression $\queryTerm{opt}$. 
\end{example}

\subsection{Schema-based query analysis}
\nobreak 

Path expressions in Tarski’s algebra do not contain any information about node labels. However, given a graph schema, it is possible to use the structure of a path expression to infer information about node labels. 
\SIGMODchange{This information can then be used to rewrite the original query into a more precise query. The rewritten query generates fewer intermediate results in general, but is always equivalent to the original 
one in terms of the final result set, when the queried database conforms strictly to the schema.}

\begin{example}
  \SIGMODchange{In the graph database of the YAGO dataset shown in Fig.~\ref{fig:YAGODatabase}, suppose we need to search for a path such as \texttt{owns/isLocatedIn}. 
  To answer this query, we only need to look for paths that start from a node labeled \texttt{PERSON} with an outgoing edge labeled \texttt{owns} to a node labeled \texttt{PROPERTY}, 
  followed by an outgoing edge labeled \texttt{isLocatedIn} to nodes labeled \texttt{CITY}. This path traversal information can be inferred from the graph schema to formulate a precise path traversal query. 
  When we do not use this schema information, the query considers all edges labeled \texttt{isLocatedIn} that start from nodes labeled \texttt{PROPERTY}, \texttt{CITY}, and \texttt{REGION}, respectively.}
\end{example}

\subsubsection{Annotated path expressions}

We first extend the grammar of Tarski’s algebra to annotate
concatenations with node labels: \emph{annotated path expressions}
$\annquery{}$ follow the same grammar as described in
Fig.~\ref{fig:Tarski} except that concatenation $\concatenation$ can be
replaced by its annotated version $\annconcat{\nodelabelvar}$ where
$\nodelabelvar$ is a node label. The expression
$\annquery{1}\annconcat\nodelabelvar\annquery{2}$ represents paths
which follow $\annquery{1}$, arrive at a node labeled
$\nodelabelvar$, and go on from there following $\annquery{2}$.
Formally, we define: 
\small 
\begin{align*}\llbracket
\annquery{1}\annconcat\nodelabelvar\annquery{2}\rrbracket_{\Database{}}
= \{(n,m) \mid~ &\exists z \in
\Node{\Database{}}~\NodeLabeling{\Database{}}(z) = \nodelabelvar
        \wedge (n,z) \in \llbracket \annquery{1} \rrbracket_{\Database{}} 
        \wedge (z,m) \in \llbracket \annquery{2}
        \rrbracket_{\Database{}}\}
\end{align*}
\normalsize
where $\llbracket\cdot\rrbracket$ is defined as in
Fig.~\ref{fig:intercqtGrammar} for the other cases.

Our idea here is that, when querying a database, replacing a plain
path expression with a set of annotated ones can help reduce the size
of intermediary results by keeping only the relevant data and thus
improve efficiency. To determine how these annotations can be added,
we use the schema.

\subsubsection{Graph schema triples}
\nobreak 

Given a graph schema $\Schema{}$ as in
Def.~\ref{def:PGSchema}, we have that for each edge
$e_i \in \Edge{\Schema{}}$, there exists a pair of source and target
nodes $\EdgeSrcTrgLabeling{\Schema{}}(e_i) = (n_i, n_j)$. For each
such edge, we consider the \emph{basic graph schema triple}
$\triplevar_i = \SingleTriple{\nodelabelvar}{\edgelabelvar}{\nodelabelvar'}$
constituted by the source label, the edge label and the target label,
without any information about properties. Formally:

\begin{definition}\label{def:BGSTriples}
  The set of~\emph{basic graph schema triples} of $\Schema{}$,
  $\SchemaTriple{b}(\Schema{})$, is defined as follows:
  \small 
  \begin{align*}\SchemaTriple{b}(\Schema{}) = \{\SingleTriple{\nodelabelvar}{\edgelabelvar}{\nodelabelvar'})\mid~ &\exists n_i, n_j\in\nodesch~~\exists
  e_i\in\edgesch ~ \edgelab(e_i) = \edgelabelvar \wedge ~ \edgesrctrg(e_i)
  = (n_i, n_j) \wedge \nodelab(n_i)= \nodelabelvar \wedge
  \nodelab(n_j) = \nodelabelvar'\}\end{align*}
  \normalsize
\end{definition}

\begin{example}\label{examp:BGSTriple}
    The graph schema as shown in Fig.~\ref{fig:YAGOSchema} contains seven edges; therefore, the set associated with the schema contains seven basic graph schema triples $\SchemaTriple{b}(\Schema{}) = \{t_1,\ldots,t_7\}$. 
    For instance, the triple $t_1$ = (\texttt{PERSON}, \texttt{owns}, \texttt{PROPERTY}) has \texttt{PERSON}  as source node label, \texttt{PROPERTY}  as target node label and \texttt{owns} as an associated edge label. 
    Similarly, the triple $t_2$ = (\texttt{PROPERTY}, \texttt{isLocatedIn}, \texttt{CITY}) has \texttt{PROPERTY}  as source node label, \texttt{CITY}  as target node label and \texttt{isLocatedIn} as an associated edge label.

    Consider the path expression \texttt{owns}. The only triple
    containing this label is $t_1$. If we query a database conforming
    to our schema, we are able to know that this path expression will
    only return results conforming to $t_1$, in the sense that their
    source node will be labeled \texttt{PERSON} and their target node
    \texttt{PROPERTY}.
  \end{example}

Basic graph schema triples correspond to path expressions consisting
of a single edge label. More generally, we define graph schema triples
as follows:
  
\begin{definition}\label{def:GSTriple}
  A \emph{graph schema triple} is a triple $\SingleTriple{\nodelabelvar}
  {\annquery{}} {\nodelabelvar'}$ where $\nodelabelvar$ and
  $\nodelabelvar'$ are node labels and $\annquery{}$ an annotated path
  expression. For a graph schema triple $\triplevar$, we write
  $\tripleSrc{}(\triplevar), \tripleEdgeLabel{}(\triplevar)$ and
  $\tripleTrg{}(\triplevar)$ respectively the source node label,
  annotated path expression and target node label.
\end{definition}

Given a plain path expression and a graph schema, we can compute a
number of graph schema triples which are \emph{compatible} with the
path expression. For example, if we consider the path expression
$\texttt{owns}\concatenation\texttt{isLocatedIn}$, considering that
triples $t_1$ and $t_2$ from
Exp.~\ref{examp:BGSTriple} share a common node label
\texttt{PROPERTY}, we can build the triple
$\SingleTriple{\texttt{PERSON}}{\texttt{owns}\annconcat{\texttt{PROPERTY}}\texttt{isLocatedIn}}{\texttt{CITY}}$.
  In that example, it is the only triple compatible with the expression.

\subsubsection{Path expression and triple compatibility}
\nobreak 

\begin{figure*} 
  \scriptsize  
    \begin{align*}
      \inference
      {\SingleTriple{\nodelabelvar}{\edgelabelvar}{\nodelabelvar'}\in\BasicSchemaTriples(\Schema{})}
      {\judg{\edgelabelvar}{\SingleTriple{\nodelabelvar}{\edgelabelvar}{\nodelabelvar'}}}
      &\textsc{(TBasic)}&
      \inference
      {\judg{\queryTerm{1}}{\SingleTriple{\nodelabelvar}{\annquery{1}}{\nodelabelvar'}} &
       \judg{\queryTerm{2}}{\SingleTriple{\nodelabelvar'}{\annquery{2}}{\nodelabelvar''}}}
      {\judg{\queryTerm{1}\concatenation\queryTerm{2}}{\SingleTriple{\nodelabelvar}{\annquery{1}\annconcat{\nodelabelvar'}\annquery{2}}{\nodelabelvar''}}}
      &\textsc{(TConcat)}&
      \inference
      {\judg{\queryTerm{}}{\SingleTriple{\nodelabelvar}{\annquery{}}{\nodelabelvar'}}}
      {\judg{\reverse{\queryTerm{}}}{\SingleTriple{\nodelabelvar'}{\reverse{\annquery{}}}{\nodelabelvar}}}
      &\textsc{(TMinus)}
      \\
      \inference
      {\judg{\queryTerm{2}}{\triplevar}}
      {\judg{\queryTerm{1}\union\queryTerm{2}}{\triplevar}}
      &\textsc{(TUnionR)}&
      \inference
      {\judg{\queryTerm{1}}{\SingleTriple{\nodelabelvar}{\annquery{1}}{\nodelabelvar'}} &
       \judg{\queryTerm{2}}{\SingleTriple{\nodelabelvar'}{\annquery{2}}{\nodelabelvar''}}}
      {\judg{\queryTerm{1}\branch{\queryTerm{2}}}{\SingleTriple{\nodelabelvar}{\annquery{1}\branch{\annquery{2}}}{\nodelabelvar'}}}
      &\textsc{(TBranchR)}&
      \inference
      {\judg{\queryTerm{1}}{\SingleTriple{\nodelabelvar}{\annquery{1}}{\nodelabelvar'}} &
       \judg{\queryTerm{2}}{\SingleTriple{\nodelabelvar}{\annquery{2}}{\nodelabelvar'}}}
      {\judg{\queryTerm{1}\conjunction\queryTerm{2}}{\SingleTriple{\nodelabelvar}{\annquery{1}\conjunction\annquery{2}}{\nodelabelvar'}}}
      &\textsc{(TConj)}
      \\
      \inference
      {\judg{\queryTerm{1}}{\triplevar}}
      {\judg{\queryTerm{1}\union\queryTerm{2}}{\triplevar}}
      &\textsc{(TUnionL)}&
      \inference
      {\judg{\queryTerm{1}}{\SingleTriple{\nodelabelvar}{\annquery{1}}{\nodelabelvar'}} &
       \judg{\queryTerm{2}}{\SingleTriple{\nodelabelvar}{\annquery{2}}{\nodelabelvar''}}}
      {\judg{\branch{\queryTerm{1}}{\queryTerm{2}}}{\SingleTriple{\nodelabelvar}{\branch{\annquery{1}}{\annquery{2}}}{\nodelabelvar''}}}
      &\textsc{(TBranchL)}&
      \inference
      {\triplevar\in\PlusComp{}(\queryTerm{}, \typ{\queryTerm{}})}
      {\judg{\transclos{\queryTerm{}}}{\triplevar}}
      &\textsc{(TPlus)}
    \end{align*}
    \caption{Inference rules for the path expression-graph schema
      triple compatibility relation.}  
    \label{fig:TypeSystemLQ}
\end{figure*}

The set of triples compatible with a given path expression
$\queryTerm{}$ according to a schema $\Schema{}$ can be computed
inductively from the sets of triples compatibles with the parts of
$\queryTerm{}$. We do this using inference rules and an auxiliary function.

\begin{definition}[path expression-triple compatibility]
  Let $\queryTerm{}$ be a path expression, $\Schema{}$ a graph schema and $\triplevar$ a graph schema triple. The judgement
  $\judg{\queryTerm{}}{\triplevar}$ means that under schema $\Schema{}$,
  $\queryTerm{}$ is \emph{compatible} with $\triplevar$. It is defined
  inductively by the rules in Fig.~\ref{fig:TypeSystemLQ}, where
  $\typ{\queryTerm{}}$ represents the set of all triples compatible
  with $\queryTerm{}$: 
  \small $$\typ{\queryTerm{}} = \{\triplevar\mid\judg{\queryTerm{}}{\triplevar}\}$$
  \normalsize
\end{definition}

The last rule, \textsc{TPlus}, relies on the auxiliary function
$\PlusComp{}$, which works on the whole set of triples compatible with
$\queryTerm{}$ at once. This function, defined below, generates two
kinds of triples: (a) triples where the path expression is
$\transclos{\queryTerm{}}$ itself, with all annotations dropped, and
(b) triples where the path expression does not contain +, the
transitive closure operator, anymore. The rationale is that if the schema information allows us to avoid the costly\footnote{In terms of time and 
computing resources required for querying graph databases~\cite{hellings2018tarski,hellings2023expressive}.} 
transitive closure, we prefer to do so. However, we do not want to overcomplicate the expression if we cannot avoid the transitive closure operation.

To determine when transitive closure can be avoided, we associate to the set
$\typ{\queryTerm{}}$ the \emph{directed graph} whose vertices are the node
labels and whose edges are the triples. Any path of length $n$ in that
graph yields a triple compatible with $\queryTerm{}^n$ (as we can
infer by repeated application of \textsc{TConcat}). Since the meaning
of $\transclos{\queryTerm{}}$ is the union of the $\queryTerm{}^n$ for
all $n\geq 1$, the set of all (non-empty) paths in the graph
corresponds to triples compatible with $\transclos{\queryTerm{}}$. If
this set is finite, we actually define it as \emph{the} set of triples
compatible with $\transclos{\queryTerm{}}$. If it is infinite, then
transitive closure cannot be removed. Since the existence of
infinitely many paths is equivalent to the presence of cycles in the graph,
we propose the following algorithm for computing $\PlusComp{}$:

\begin{definition}[Plus Compatibility]\label{def:PlC}
  $\PlusComp{}(\queryTerm{}, \SchemaTriple{})$ is the set of triples
  resulting of the following:
  \begin{enumerate}
  \item Let $G$ be the directed graph associated with
    $\SchemaTriple{}$;
  \item Let $K$ be the set of vertices of $G$ which are part of a
    cycle;
  \item Let $R$ be the initially empty result set;
  \item For each path $p$ from $A$ to $B$ without cycles in $G$:
    \begin{itemize}
    \item if any of the vertices of $p$ (including $A$ and $B$) is in
      $K$, then add to $R$ the triple $\SingleTriple{A}{\transclos{\queryTerm{}}}{B}$
    \item else
      \begin{itemize}
      \item let $\psi$ be the annotated path expression
        resulting of concatenating all triples in $p$;
      \item add to $R$ the triple $\SingleTriple{A}{\psi}{B}$
      \end{itemize}
    \end{itemize}
  \item Return $R$.
  \end{enumerate}
\end{definition}

\begin{example}\label{examp:RuleApplication}
Consider the path expression $\queryTerm{4} = \texttt{livesIn} \concatenation
\texttt{isLocatedIn}^{+} \concatenation\texttt{dealsWith}^{+}$, and
let $\Schema{}$ be the schema of Fig.~\ref{fig:YAGOSchema}.
 Table~\ref{tab:RuleApplication} shows the sets of triples associated
to $\queryTerm{4}$  and its
three sub-terms $\queryTerm{1} = \texttt{livesIn} (\texttt{lvIn})$,
$\queryTerm{2} = \texttt{isLocatedIn}^{+} (\texttt{isL}^{+})$ and
$\queryTerm{3} = \texttt{dealsWith}^{+} (\texttt{dw}^{+})$.

For $\queryTerm{3}$, we have
$\typ{\texttt{dealsWith}} = \{(\texttt{COUNTRY}, \texttt{dealsWith},
\texttt{COUNTRY})\}$. The graph associated to that singleton set has
one vertex and one edge which forms a cycle, therefore the transitive
closure cannot be eliminated: we have
$\typ{\texttt{dealsWith}^{+}} = (\texttt{COUNTRY},
\texttt{dealsWith}^{+}, \texttt{COUNTRY})$.

For $\queryTerm{2}$, the set
$\typ{\texttt{isLocatedIn}}$ contains 3 triples; the associated graph
has 4 vertices (\texttt{PROPERTY}, \texttt{CITY}, \texttt{REGION} and
\texttt{COUNTRY}) and 3 edges corresponding to the 3 triples. It
contains no cycle, therefore
$\typ{\texttt{isLocatedIn}^{+}}$ contains 6 triples, corresponding to
the 6 non-empty paths in the graph.

Notice
that, while \textsc{TPlus} can yield many triples when it is possible to
remove the transitive closure, \textsc{TConcat} on the other hand can
drastically reduce their number, so that in the end there is only one
triple compatible with $\queryTerm{4}$.
\end{example}

\begin{table}[htbp]
  \centering
  \caption{Application of the inference rules of
    Fig.~\ref{fig:TypeSystemLQ} to the term $\queryTerm{4} = \texttt{livesIn} \concatenation \texttt{isLocatedIn}^{+} \concatenation\texttt{dealsWith}^{+}$.}  
  \label{tab:RuleApplication}
  \resizebox{.85\linewidth}{!}{%
  \begin{tabular}{c>{\raggedright}p{9cm}c}
    \toprule
    \textbf{TERM} & \textbf{TRIPLES}  & \textbf{RULE} \\
    \midrule
    \texttt{lvIn}   & \big(\texttt{PER}, \texttt{lvIn}, \texttt{CITY}\big) & \textsc{TBasic} \\
    \hline 
    $\texttt{isL}^{+}$ & \big(\texttt{PRO}, \texttt{isL}, \texttt{CITY}\big), \big(\texttt{CITY}, \texttt{isL}, \texttt{REG}\big),\big(\texttt{REG}, \texttt{isL}, \texttt{CUN}\big),\\ \big(\texttt{PRO}, \texttt{isL$\annconcat{\texttt{CITY}}$isL}, \texttt{REG}\big), \big(\texttt{PRO}, \texttt{isL$\annconcat{\texttt{CITY}}$isL$\annconcat{\texttt{REG}}$isL}, \texttt{CUN}\big), \\\big(\texttt{CITY}, \texttt{isL$\annconcat{\texttt{REG}}$isL}, \texttt{CUN}\big)& \textsc{TPlus}\\  
    \hline 
    $\texttt{dw}^{+}$      & \big(\texttt{CUN}, $\texttt{dw}^{+}$, \texttt{CUN}\big)  & \textsc{TPlus} \\
    \hline  
    $\texttt{lvIn} \concatenation \texttt{isL}^{+}$ & \big(\texttt{PER}, \texttt{lvIn$\annconcat{\texttt{CITY}}$isL}, \texttt{REG}\big), \big(\texttt{PER}, \texttt{lvIn$\annconcat{\texttt{CITY}}$isL$\annconcat{\texttt{REG}}$isL}, \texttt{CUN}\big)  & \textsc{TConcat}\\
    \hline   
    $\texttt{lvIn} \concatenation \texttt{isL}^{+} \concatenation \texttt{dw}^{+}$ & \big(\texttt{PER}, \texttt{lvIn$\annconcat{\texttt{CITY}}$isL$\annconcat{\texttt{REG}}$isL$\annconcat{\texttt{CUN}}$dw}$^{+}$, \texttt{CUN}\big)  & \textsc{TConcat}\\
    \bottomrule
      & \textbf{\texttt{isL=isLocatedIn, lvIn=livesIn,} \texttt{REG=REGION}} & \\
     & \textbf{\texttt{dw=dealsWith,} \texttt{PER=PERSON,} \texttt{PRO=PROPERTY,} \texttt{CUN=COUNTRY}} & \\
  \end{tabular}
  }
\end{table}

\subsubsection{Properties of the compatibility relation}
The path expression-triple compatibility relation is defined relative
to a schema. Consider now a database conforming to this schema. The
compatibility relation enjoys the following properties relative to any
such database:
\begin{itemize}
\item Soundness, meaning that whenever our path expression
  $\queryTerm{}$ is compatible with some triple
  ${\SingleTriple{\nodelabelvar}{\annquery{}}{\nodelabelvar'}}$, then
  all pairs of a node labeled $\nodelabelvar$ and a node labeled
  $\nodelabelvar'$ linked, in the database, by a path conforming to
  $\annquery{}$, are part of the result of $\queryTerm{}$;
\item Completeness, meaning that whenever $\queryTerm{}$ returns a
  pair of a node labeled $\nodelabelvar$ and a node labeled
  $\nodelabelvar'$, there exists a triple
  ${\SingleTriple{\nodelabelvar}{\annquery{}}{\nodelabelvar'}}$,
  compatible with $\queryTerm{}$, such that these nodes are linked, in
  the database, by a path conforming to $\annquery{}$.
\end{itemize}
These two properties make the initial path expression semantically
equivalent to its set of compatible triples, so long as we only
consider databases conforming to the schema. Formally:

\begin{theorem}\label{th:soundcomplete}
  Let $\Schema{}$ be a graph schema, let $\Database{}$ be a
  graph database conforming to $\Schema{}$ \emph{via} the
  schema-database mapping $\SchemaDatabaseMap{}$, and let
  $\queryTerm{}$ be a path expression.

\textsc{Soundness.}  Assume
  $\judg{\queryTerm{}}{\SingleTriple{\nodelabelvar}{\annquery{}}{\nodelabelvar'}}$
  holds for some triple
  $\SingleTriple{\nodelabelvar}{\annquery{}}{\nodelabelvar'}$. Let
  $(s,t)\in\sem{\annquery{}}$ such that $\NodeLabeling{\Database{}}(s) =
  \nodelabelvar$ and $\NodeLabeling{\Database{}}(t) = \nodelabelvar'$. Then $(s,t)\in\sem{\queryTerm{}}$.

  \textsc{Completeness.} Let $(s, t)\in\sem{\queryTerm{}}$. Then there
  exists $\annquery{}$ such that  $(s, t)\in\sem{\annquery{}}$ and
  $\judg{\queryTerm{}}{\SingleTriple{\NodeLabeling{\Database{}}(s)}{\annquery}{\NodeLabeling{\Database{}}(t)}}$.
\end{theorem}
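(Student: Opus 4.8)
The plan is to prove both directions by induction on the structure of the path expression $\queryTerm{}$ --- for \textsc{Soundness}, equivalently on the derivation of $\judg{\queryTerm{}}{\triplevar}$ --- with one case per inference rule of Fig.~\ref{fig:TypeSystemLQ}. \textsc{Soundness} is the easier direction and in fact does not use conformance. The base case \textsc{TBasic} is trivial since there $\annquery{}=\queryTerm{}=\edgelabelvar$. For \textsc{TConcat}, \textsc{TUnionL}, \textsc{TUnionR}, \textsc{TConj}, \textsc{TBranchR}, \textsc{TBranchL} and \textsc{TMinus}, one unfolds the semantics of the annotated path expression appearing in the conclusion, applies the induction hypothesis to the premises, and recombines with the plain semantics of Fig.~\ref{fig:intercqtGrammar}; the only thing to observe is that whenever evaluating an annotated operator introduces a witness node, that node's label is exactly the annotation carried by the operator, so the induction hypotheses apply without extra work (and for \textsc{TConcat} the two halves meet at that node). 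For \textsc{TPlus}, with $\queryTerm{}=\transclos{\queryTerm{0}}$, the triple is one output by $\PlusComp{}(\queryTerm{0},\typ{\queryTerm{0}})$: either its path expression is $\transclos{\queryTerm{0}}=\queryTerm{}$ itself, and there is nothing to prove, or it is the annotated concatenation $\annquery{1}\annconcat{A_1}\cdots\annconcat{A_{n-1}}\annquery{n}$ along a cycle-free path $A_0\to\cdots\to A_n$ of the graph associated with $\typ{\queryTerm{0}}$, each edge $\SingleTriple{A_{i-1}}{\annquery{i}}{A_i}$ of which is compatible with $\queryTerm{0}$; a pair in its semantics then splits into $n$ consecutive pairs, the $i$-th lying in $\sem{\annquery{i}}$ with endpoint labels $A_{i-1},A_i$, so $n$ applications of the induction hypothesis for $\queryTerm{0}$ put each pair in $\sem{\queryTerm{0}}$ and hence the whole pair in $\sem{\queryTerm{0}^{n}}\subseteq\sem{\transclos{\queryTerm{0}}}$. (Here one reads the paths enumerated by $\PlusComp{}$ as carrying their sequence of edges, since that graph may have parallel edges.)

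For \textsc{Completeness} the induction is again on $\queryTerm{}$, and this is where conformance enters: in the base case $\queryTerm{}=\edgelabelvar$, a pair $(s,t)\in\sem{\edgelabelvar}$ is an edge of $\Database{}$, which by Def.~\ref{def:SDConsistency} is mapped by $\SchemaDatabaseMap{}$ to a schema edge with the same label and the same source/target labels, so $\SingleTriple{\NodeLabeling{\Database{}}(s)}{\edgelabelvar}{\NodeLabeling{\Database{}}(t)}\in\BasicSchemaTriples(\Schema{})$ and \textsc{TBasic} gives the judgement with $\annquery{}=\edgelabelvar$. For every compound operator, one reads off Fig.~\ref{fig:intercqtGrammar} a witnessing decomposition of $(s,t)$ through concrete database nodes, applies the induction hypothesis to the components at those nodes, and recombines by the matching rule. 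What makes this work is that the induction hypothesis always returns a triple whose source and target labels are $\NodeLabeling{\Database{}}$ of the relevant nodes, so the side conditions of the rules hold automatically: for \textsc{TConcat} the two triples meet at the label $\NodeLabeling{\Database{}}(z)$ of the intermediate witness $z$; for \textsc{TConj} both triples have source label $\NodeLabeling{\Database{}}(s)$ and target label $\NodeLabeling{\Database{}}(t)$; for \textsc{TBranchR}/\textsc{TBranchL} the branch point is a common node, so its label agrees on both sides; \textsc{TUnionL}/\textsc{TUnionR} and \textsc{TMinus} are immediate.

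The main obstacle is the \textsc{TPlus} case of \textsc{Completeness}. Given $(s,t)\in\sem{\transclos{\queryTerm{0}}}=\bigcup_{n\geq1}\sem{\queryTerm{0}^{n}}$, fix $n$ and a chain $s=z_0,\ldots,z_n=t$ with $(z_{i-1},z_i)\in\sem{\queryTerm{0}}$; the induction hypothesis turns each step into a compatible triple $\SingleTriple{A_{i-1}}{\annquery{i}}{A_i}$ with $A_i=\NodeLabeling{\Database{}}(z_i)$ and $(z_{i-1},z_i)\in\sem{\annquery{i}}$, and these triples form a walk $A_0\to\cdots\to A_n$ in the finite graph $G$ associated with $\typ{\queryTerm{0}}$. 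We must produce a triple of $\PlusComp{}(\queryTerm{0},\typ{\queryTerm{0}})$ of the form $\SingleTriple{A_0}{\annquery{}}{A_n}$ with $(s,t)\in\sem{\annquery{}}$. If the walk avoids the set $K$ of cycle vertices then it is already simple --- a repeated vertex would lie on a cycle and hence be in $K$ --- so it is one of the cycle-free paths enumerated by $\PlusComp{}$, which therefore contains $\SingleTriple{A_0}{\annquery{1}\annconcat{A_1}\cdots\annconcat{A_{n-1}}\annquery{n}}{A_n}$, and reassembling the per-step memberships with the annotations $A_i$ (whose witnesses are exactly the $z_i$) puts $(s,t)$ in its semantics. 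If the walk does meet $K$ we use the triple $\SingleTriple{A_0}{\transclos{\queryTerm{0}}}{A_n}$, whose semantics already contains $(s,t)$; it is produced by $\PlusComp{}$ provided there is a cycle-free path from $A_0$ to $A_n$ meeting $K$, which follows from a small graph lemma: among the walks from $A_0$ to $A_n$ meeting $K$ (ours is one) pick a shortest; if it repeated a vertex $v$, then $v$ lies on the closed sub-walk at $v$ and hence $v\in K$, so excising that sub-walk gives a strictly shorter walk still meeting $K$ (at $v$) --- contradiction --- hence the shortest one is simple. This lemma, together with the bookkeeping about parallel edges in $G$, is the only ingredient that is not a routine unfolding of the definitions.
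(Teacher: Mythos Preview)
Your proposal is correct and follows essentially the same structural-induction approach as the paper's own proof: the paper likewise dispatches the base case via schema conformance, waves through the non-\textsc{TPlus} inductive cases as routine unfoldings, and details only the transitive-closure case. Your treatment of \textsc{TPlus} completeness via a ``shortest walk meeting $K$'' argument is a cosmetic variant of the paper's iterated shortcutting (both establish that some simple $A_0$--$A_n$ path touches $K$), and your remark about parallel edges in the auxiliary graph is a detail the paper leaves implicit.
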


\begin{proof} Both properties are proved by structural
  induction on $\queryTerm{}$. The base cases, where $\queryTerm{}$ is
  a single edge label, are a direct consequence of the consistency
  between $\Database{}$ and $\Schema{}$ (def.~\ref{def:SDConsistency})
  and of the definition of basic graph schema triples
  (def.~\ref{def:BGSTriples}). The inductive cases other than transitive
  closure are straightforward since the inference rules of
  Fig.~\ref{fig:TypeSystemLQ} follow exactly the structure of the
  semantics defined in Fig.~\ref{fig:intercqtGrammar}. Finally, we
  detail the case of transitive closure:

  \textbf{Soundness:} assume the property holds for $\queryTerm{}$.
    Let 
    $\SingleTriple{\nodelabelvar}{\annquery{}}{\nodelabelvar'} \in
    \PlusComp{}(\queryTerm{}, \typ{\queryTerm{}})$
    and let $(s,t)\in\sem{\annquery{}}$ such that $\NodeLabeling{\Database{}}(s) =
  \nodelabelvar$ and $\NodeLabeling{\Database{}}(t) = \nodelabelvar'$.
  We have two cases: if $\annquery{} = \transclos{\queryTerm{}}$ then
  the result is immediate. Otherwise, according to Def.~\ref{def:PlC},
  it means that there exists a sequence of triples $t_1,\ldots,t_n\in
  \typ{\queryTerm{}}$ such that: $\tripleSrc{}(t_1) = \nodelabelvar$,
  $\tripleTrg{}(t_i) = \tripleSrc{}(t_{i+1})$ for all $i < n$,
  $\tripleTrg{}(t_n) = \nodelabelvar'$, and $\annquery{}$ is the
  concatenation of $t_1\ldots t_n$ (i.\ e.\ 
  $\annquery{} = \tripleEdgeLabel{}(t_1)\annconcat{\tripleTrg{}(t_1)}\tripleEdgeLabel{}(t_2)\annconcat{\tripleTrg{}(t_2)}\cdots\annconcat{\tripleTrg{}(t_{n-1})}\tripleEdgeLabel{}(t_n)$). Since $(s, t)$ is in $\sem{\annquery{}}$, it means that
  there is a path in $\Database{}$ from $s$ to $t$ which conforms
  to $\annquery{}$. Since $\annquery{}$ is the concatenation of
  $t_1\ldots t_n$, that path has to be the concatenation of $n$ sub-paths
  conforming respectively to the triples $t_1\ldots t_n$. Since each
  of those triples is compatible with $\queryTerm{}$, we can use the
  induction hypothesis for each of them and see that the
  source-target pair for each one is in $\sem{\queryTerm{}}$. It
  results that $(s, t)$ is in $\sem{\queryTerm{}^n}$, and therefore in
  $\sem{\transclos{\queryTerm{}}}$.

  \textbf{Completeness:} assume the property holds for $\queryTerm{}$.
  Let $(s, t)\in\sem{\transclos{\queryTerm{}}}$. By definition, there
  exists $k\geq 1$ such that $(s, t)\in\sem{\queryTerm{}^k}$.
  This means that there is a sequence $n_0\ldots n_k$ of nodes in $D$
  with $n_0 = s$ and $n_k = t$ such that we have
  $(n_{i-1}, n_i)\in\sem{\queryTerm{}}$ for all $i$ between $1$ and
  $k$. By induction hypothesis, for each $i$ there is a triple
  $t_{i} =
  \SingleTriple{\NodeLabeling{\Database{}}(n_{i-1})}{\annquery{i}}{\NodeLabeling{\Database{}}(n_i)}$
  such that $\judg{\queryTerm{}}{t_i}$ and
  $(n_{i-1}, n_i)\in\sem{\annquery{i}}$. Let $G$ and $K$ be the graph
  and set of vertices defined in Def.~\ref{def:PlC}. The sequence
  $t_1\ldots t_k$ forms a path from $\NodeLabeling{\Database{}}(s)$ to
  $\NodeLabeling{\Database{}}(t)$ in $G$. If this path contains no
  cycle, then we see from step (4) of Def.~\ref{def:PlC} that
  $\PlusComp{}(\queryTerm{}, \typ{\queryTerm{}})$ contains a triple
  $\SingleTriple{\NodeLabeling{\Database{}}(s)}{\annquery{}}{\NodeLabeling{\Database{}}(t)}$
  with $\annquery{}$ equal either to $\transclos{\queryTerm{}}$ or to
  the concatenation of $\annquery{1}\ldots\annquery{k}$. In both
  cases, we have $(s, t)\in\sem{\annquery{}}$. If the path
  $t_1\ldots t_k$ does contain a cycle, say $\NodeLabeling{\Database{}}(n_i) = \NodeLabeling{\Database{}}(n_j)$ with $j>i$,
  then the sequence $t_1\ldots t_i,t_{j+1}\ldots t_k$ is also a path
  in $G$ without that cycle, and we have
  $\NodeLabeling{\Database{}}(n_j)\in K$ since
  $t_{i+1}\ldots t_j$ is a cycle in $G$. If that path still contains a
  cycle, we can reiterate this `shortcutting' until there are
  none left, and obtain a path from $\NodeLabeling{\Database{}}(s)$ to
  $\NodeLabeling{\Database{}}(t)$ without cycles and
  containing at least one vertex in $K$. Therefore,
  $\PlusComp{}(\queryTerm{}, \typ{\queryTerm{}})$ must contain the
  triple
  $\SingleTriple{\NodeLabeling{\Database{}}(s)}{\transclos{\queryTerm{}}}{\NodeLabeling{\Database{}}(t)}$,
  which allows us to conclude since we already have
  $(s,t)\in\sem{\transclos{\queryTerm{}}}$.
\end{proof}

\subsection{From triples to rewritten query}\label{subsec:mergeAnnPE}
\subsubsection{Merging triples}
\nobreak 

The compatibility relation allows us to obtain, from a path expression
$\queryTerm{}$
and a schema $\Schema{}$, a set $\typ{\queryTerm{}}$ of triples which
represent all the paths, annotated with node labels,
which can possibly contribute to the query result. In principle, we
could then convert each individual triple from $\typ{\queryTerm{}}$
into a (very specific) query of its own, combine all those queries
with a union and obtain a query which, provided the database conforms
to the schema, is equivalent to the initial path expression, while
avoiding the computation of intermediary results we know from the
schema are useless.
However, when several triples have the same underlying path expression
and differ only by the node labels, it would not be efficient to
compute their results separately and do the union afterwards. So, we
merge such triples by replacing all single \SIGMODchange{node labels} with sets of
possible node labels.

\begin{definition}
  Let $T$ be a set of graph schema triples with the same underlying
  path expression $\phi$. The \emph{merged triple} of $T$ is the triple
  $M(T) = (L_1, \Psi, L_2)$ where $L_1 = \{l\mid (l, \_, \_)
  \in T\}$, $L_2 = \{l\mid (\_, \_, l)\in T\}$, and $\Psi$ is the path
  expression $\phi$ where each concatenation step is annotated with
  the set of all labels annotating the same concatenation step in
  $\{\psi\mid (\_, \psi, \_)\in T\}$.
\end{definition}

\begin{example}\label{example:duplicateElimination1}
  Consider an initial path expression $\queryTerm{1} = a^{+} \concatenation b \concatenation d$. Suppose the set of triples $\typ{\queryTerm{}}$ 
  contains the following: $(m, a^{+}\annconcat{n} b\annconcat{l} d,
  p)$ and $(m, a^{+}\annconcat{q} b\annconcat{r} d, l)$. These triples
  can be merged into $(\{m\}, a^{+}\annconcat{\{n, q\}}
  b\annconcat{\{l, r\}} d, \{p, l\})$.
\end{example}

From the set $\typ{\queryTerm{}}$, we compute the set of merged
triples of $\phi$, $\mathcal{M}_{\Schema{}}(\queryTerm{})$, by: 1.
partitioning $\typ{\queryTerm{}}$ in subsets having the same
underlying path expression, and 2. taking the merged triple of each subset.
Note that in many cases, this will result in one single merged
triple, whose underlying path expression is $\phi$ itself, but it is
not the case whenever $\phi$ contains a union, or a transitive closure
which could be removed.

\subsubsection{Removing redundant annotations}

In some cases, the source and target node labels assigned to a
specific edge label in an annotated path expression are the only
source and target node labels associated with that same edge label in
the graph schema. In such a case, the annotation would not be useful
for minimizing intermediate result size since the whole dataset
already conforms to the annotation: on the contrary, it would
introduce a useless filtering step. Therefore, as a last step before
constructing the rewritten query, we detect and remove such annotations.

\begin{example}\label{example:duplicateElimination2}
  Let us assume that in a graph schema, source and target nodes 
  associated with edges labeled as $d$ and $b$ are $l$ and $r$ respectively. The node labels $l$ and $r$ can be eliminated from the annotated path expression in Ex.~\ref{example:duplicateElimination1}, 
  resulting in $(\{m\}, a^{+}\annconcat{\{n, q\}} b\concatenation d,
  \{p, l\})$. 
\end{example}

Note that, after merging triples and removing redundant annotations,
some queries may revert to the initial path expression. This means
that the schema did not contain information which would allow
optimizing the query.
  
\subsubsection{Rewritten queries}

\begin{figure}
  \centering
  \footnotesize
  \begin{align*}
    &\queryOf(\alpha, \beta, \queryTerm{}) =
    (\emptyset, \emptyset, \{(\alpha, \queryTerm{}, \beta)\}) \\
    &\queryOf(\alpha, \beta, \annquery{1}\annconcat{L}\annquery{2}) =
    \text{let $\gamma$ be a fresh variable,}\\&\quad\text{let } (\Body{1},\Atomic{1}, \CQTRelation{1}) = \queryOf(\alpha, \gamma, \annquery{1})\text{ and } (\Body{2},\Atomic{2}, \CQTRelation{2}) = \queryOf(\gamma, \beta, \annquery{2})\\&\quad\text{in }
    (\Body{1}\cup\Body{2}\cup\{\gamma\}, \Atomic{1}\cup\Atomic{2}\cup\{\NodeLabeling{\Atomic{}}(\gamma) \in L\}, \CQTRelation{1}\cup\CQTRelation{2})\\
    &\queryOf(\alpha, \beta, \annquery{1}\branch{\annquery{2}}) =
    \text{let $\gamma$ be a fresh variable,}\\&\quad\text{let } (\Body{1},\Atomic{1}, \CQTRelation{1}) = \queryOf(\alpha, \beta, \annquery{1})\text{ and } (\Body{2},\Atomic{2}, \CQTRelation{2}) = \queryOf(\beta, \gamma, \annquery{2})\\
    &\quad\text{in } (\Body{1}\cup\Body{2}\cup\{\gamma\}, \Atomic{1}\cup\Atomic{2}, \CQTRelation{1}\cup\CQTRelation{2})\\
    &\queryOf(\alpha, \beta, \branch{\annquery{1}}{\annquery{2}}) =
    \text{let $\gamma$ be a fresh variable,}\\&\quad\text{let } (\Body{1},\Atomic{1}, \CQTRelation{1}) = \queryOf(\alpha, \gamma, \annquery{1})\text{ and } (\Body{2},\Atomic{2}, \CQTRelation{2}) = \queryOf(\alpha, \beta, \annquery{2})\\
    &\quad\text{in } (\Body{1}\cup\Body{2}\cup\{\gamma\}, \Atomic{1}\cup\Atomic{2}, \CQTRelation{1}\cup\CQTRelation{2})\\
    &\queryOf(\alpha, \beta, {\annquery{1}}\conjunction{\annquery{2}}) =
    \\&\quad\text{let } (\Body{1},\Atomic{1}, \CQTRelation{1}) = \queryOf(\alpha, \beta, \annquery{1})\text{ and } (\Body{2},\Atomic{2}, \CQTRelation{2}) = \queryOf(\alpha, \beta, \annquery{2})\\
    &\quad\text{in } (\Body{1}\cup\Body{2}, \Atomic{1}\cup\Atomic{2}, \CQTRelation{1}\cup\CQTRelation{2})
  \end{align*}
  
  \caption{Translation of annotated path expressions into \cqt{}s.}
  \label{fig:translationToCQT}
\end{figure}

We now want to translate merged triples into $\cqt$ queries. 
For this, we first observe that the annotated path expressions
$\annquery{}$ generated by the rules of Fig.~\ref{fig:TypeSystemLQ}
and Def.~\ref{def:PlC} always obey some syntactic restrictions,
namely: no annotations appear under the transitive closure operator, and the union operator never appears, except possibly under
  transitive closure.
Furthermore, the
reverse operation only occurs in front of edge labels. These
observations allow us to reduce the number of cases in the
translation: $\annquery{}$ is either a plain path expression, a concatenation, a branching or a conjunction.

\begin{definition}[\cqt{} of a merged triple]
The translation is defined using a function $\queryOf(\alpha, \beta,
\annquery{})$, where $\alpha$ and $\beta$ are \cqt{} variables and
$\annquery{}$ an annotated path expression, which returns a triple
$(\Body{}, \Atomic{}, \CQTRelation{})$ corresponding to a \cqt{}
representing $\annquery{}$ with $\Head{} = \{\alpha,
\beta\}$ (see Def.~\ref{def:cqt}). This function is defined in
Fig.~\ref{fig:translationToCQT}.

For a given merged triple $\triplevar = \SingleTriple{L}{\annquery{}}{L'}$, we then define the associated \cqt{} as $\CQTQuery{}(\triplevar) = (\{\alpha, \beta\}, \Body{}, \Atomic{}\cup\{\NodeLabeling{\Atomic{}}(\alpha) \in L, \NodeLabeling{\Atomic{}}(\beta) \in L'\}, \CQTRelation{})$ where $(\Body{}, \Atomic{}, \CQTRelation{}) = \queryOf(\alpha, \beta, \annquery{})$.
\end{definition}

This allows us to associate to any path expression, given a schema
$\Schema{}$, a \ucqt representing the query enriched with schema
information:

\begin{definition}[schema-enriched query]
  Given a path expression $\queryTerm{}$ and a schema $\Schema{}$, the
  schema-enriched query of $\queryTerm{}$,
  $\mathcal{R}_{\Schema{}}(\queryTerm{})$, is the following \ucqt: 
  $\mathcal{R}_{\Schema{}}(\queryTerm{}) =  \{\alpha, \beta\}\leftarrow \{\bigcup_{\triplevar\in\mathcal{M}_{\Schema{}}{\queryTerm{}}}\CQTQuery{}(\triplevar)\}$
\end{definition}

Thanks to Theorem~\ref{th:soundcomplete}, we have that $\queryTerm{}$
is equivalent to $\mathcal{R}_{\Schema{}}(\queryTerm{})$ on any
database conforming strictly to $\Schema{}$.

\begin{example} Consider the path expression $\queryTerm{4} =
  \texttt{lvIn} \concatenation \texttt{isL}^{+} \concatenation
  \texttt{dw}^{+}$ of Example~\ref{examp:RuleApplication}. We saw that
  our inference system derives exactly one triple compatible with it
  in schema $\Schema{}$: \big(\texttt{PER},
  \texttt{lvIn$\annconcat{\texttt{CITY}}$isL$\annconcat{\texttt{REG}}$isL$\annconcat{\texttt{CUN}}$dw}$^{+}$,
  \texttt{CUN}\big). Then, since there is only one triple, there is
  nothing to merge. We then remove redundant labels: the schema
  $\Schema{}$ of Fig.~\ref{fig:YAGOSchema} only allows
  \textsf{livesIn} from \textsf{PERSON} to \textsf{CITY} and
  \textsf{dealsWith} from \textsf{COUNTRY} to \textsf{COUNTRY}, so the
  final unique merged triple is: \big($\emptyset$,
  \texttt{lvIn$\concatenation$isL$\annconcat{\{\texttt{REG}\}}$isL$\concatenation$dw}$^{+}$, $\emptyset$\big).
  Therefore, we can rewrite this expression into:
  \small 
  \begin{align*}
    \mathcal{R}_{\Schema{}}(\queryTerm{4}) = \{\alpha, \beta \mid
  \exists \gamma\quad &(\alpha, \texttt{lvIn}\concatenation\texttt{isL}, \gamma)\wedge(\gamma, \texttt{isL}\concatenation\texttt{dw}^{+}, \beta) \wedge \NodeLabeling{\Atomic{}}(\gamma) \in\{ \texttt{REG}\}
  \}\end{align*}
  \normalsize
We can notice that the rewritten query allows the database engine to
avoid computing the transitive closure of \texttt{isL} at all.
\end{example}

\section{System Implementation}\label{sec:Implementation}
\nobreak

We analyse the performance of queries augmented with schema information. The approach has been implemented as a system whose architecture is depicted in Figure~\ref{fig:SysArch}. Our system architecture consists of three main modules~\emph{(i) Rewriter},~\emph{(ii) Translator} and~\emph{(iii) Backend}. 

\begin{figure}[htbp]
    \centering
    \includegraphics[width=.7\linewidth]{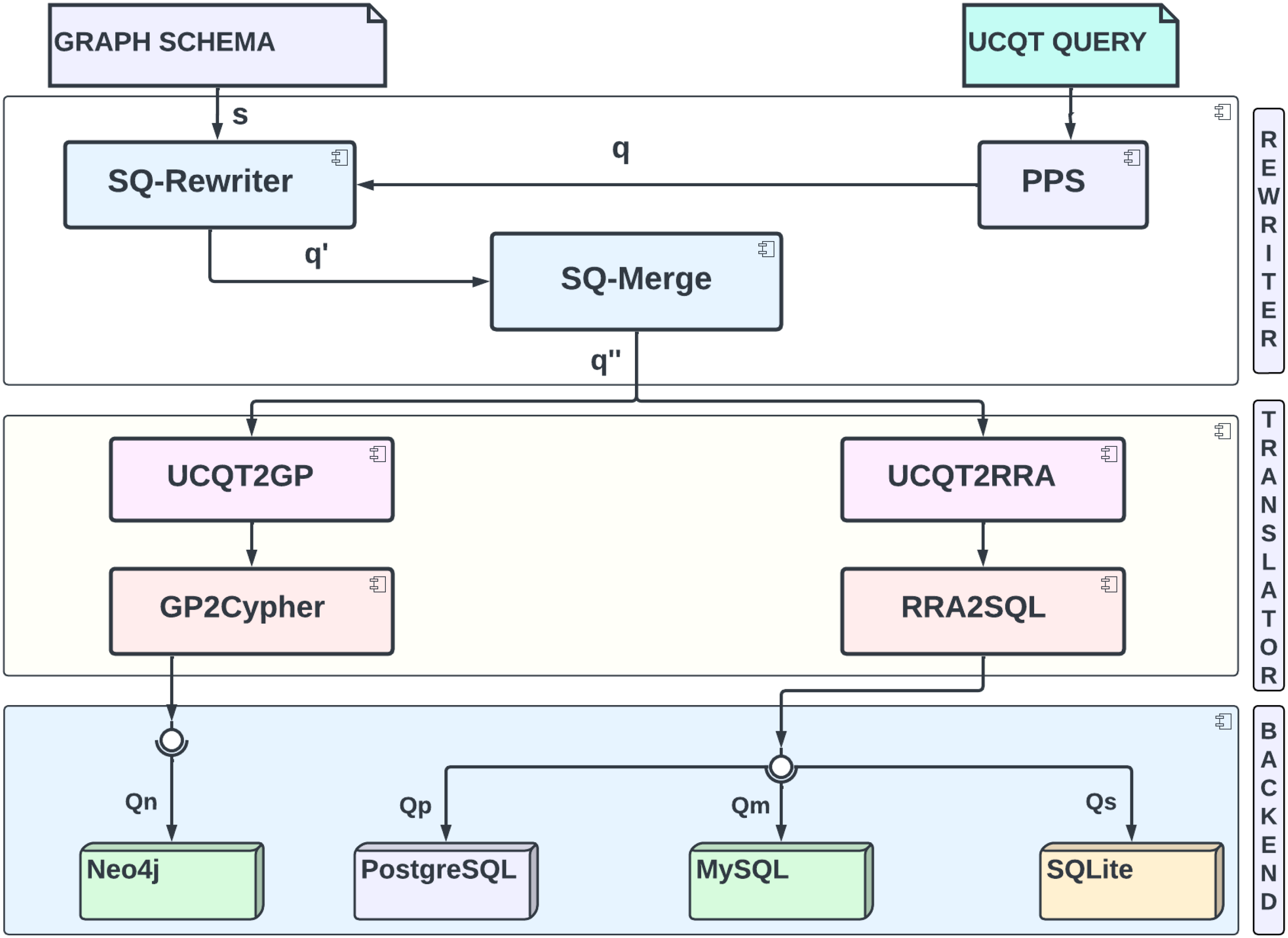}
    \caption{System architecture.}
    \label{fig:SysArch}
\end{figure}

\paragraph{\underline{Rewriter}}
The rewriter module consists of three components~\emph{Preliminary Path Simplifier} (PPS),~\emph{Schema-based Query Rewriter} (SQ-Rewriter) and~\emph{Schema-based Query Merge} (SQ-Merge). 
The PPS component takes a $\ucqt{}$ query as input, then applies the preliminary path simplification rules presented in Fig.~\ref{fig:simpleRWrules} and produces a simplified $\ucqt{}$ query $q$ as output. 
The SQ-Rewriter component implements the query rewritings leveraging schema information presented in Sec.~\ref{sec:logicalQueryOpt} for the $\ucqt$ query language (defined in Section~\ref{sec:queryingPG}). 
It takes a $\ucqt$ query $q$ and a graph schema $s$ described in the graph schema formalism (using Def.~\ref{def:PGSchema}) as input and generates a schema-aware $\ucqt$ query $q'$ as output.
The SQ-Merge component uses the graph schema to remove redundant annotations by merging path expressions as described in Sec.~\ref{subsec:mergeAnnPE}. It takes a schema-aware $\ucqt$ query $q'$ as input and produces 
merged $\ucqt$ query $q''$ as output.

\paragraph{\underline{Translator}}
 
The translator module compiles an $\ucqt$ query into a graph query that can be executed by a graph database management system (GDBMS) and a recursive SQL query that can be executed by a relational database management system (RDBMS).
The translator module uses the UCQT2GP component to convert the $\ucqt{}$ query into a union of graph patterns, which is then transformed into a Cypher query by the GP2Cypher component. 
Due to the limited expressive power offered by Cypher~\cite{sharma2021practical}, only \uctworpq{} graph patterns (not $\ucqt$) can be converted into Cypher. 
\SIGMODchange{Since we use the \ucqt{} graph query language formalism to express graph patterns, our approach extends beyond Cypher: it can also be applied to other practical graph query languages 
such as SPARQL and PGQL~\cite{sharma2021practical}. If one starts from a query written in any of these formalisms, it can easily be translated into $\ucqt$ before being given as input to our system.}

The translator module uses the UCQT2RRA component to translate $\ucqt{}$ query into a term in recursive relational algebra (RRA), and that is further optimised using the $\mRA$ system based on the 
rewritings proposed in~\cite{jachiet2020optimization}. Compared to the path expressions considered in~\cite{jachiet2020optimization}, we consider additional rules for translating~\emph{conjunction} and~\emph{branching} 
operations as shown in Tab.~\ref{tab:muratranslation}. 

\begin{table}[htbp]
  \caption{Conjunction and branching translation to Recursive Relational Algebra.}
  \label{tab:muratranslation}
  \centering
  \resizebox{.8\linewidth}{!}{%
  \begin{tabular}{cc}
    \toprule
     \textbf{Path expression} & \textbf{Recursive Relational Algebra term}\\
    \midrule
     $\llparenthesis\queryTerm{1} \conjunction \queryTerm{2}\rrparenthesis =$ & $\big \{ \rename n \trgCol {\rename m \srcCol {\rename \trgCol n {\rename \srcCol m {\muTerm{1}}} \NJoin \rename \trgCol n {\rename \srcCol m {\muTerm{2}}}}} \big | \quad \muTerm{1} \in \llparenthesis\queryTerm{1}\rrparenthesis \wedge \muTerm{2} \in \llparenthesis\queryTerm{2}\rrparenthesis \big \}$ \\
     $\llparenthesis\queryTerm{1}[\queryTerm{2}] \rrparenthesis   =$& $\big \{ \rename m \trgCol {\rename \trgCol m {\muTerm{1}} \NJoin \rename \srcCol m {\proj \srcCol {\muTerm{2}}}} \big | \quad \muTerm{1} \in \llparenthesis\queryTerm{1}\rrparenthesis \wedge \muTerm{2} \in \llparenthesis\queryTerm{2}\rrparenthesis \big \}$ \\
     $\llparenthesis[\queryTerm{1}]\queryTerm{2}\rrparenthesis    =$ & $\big \{ \rename m \srcCol {\rename \srcCol m {\proj \srcCol {\muTerm{1}}} \NJoin \rename \srcCol m {\muTerm{2}}} \big | \quad \muTerm{1} \in \llparenthesis\queryTerm{1}\rrparenthesis \wedge \muTerm{2} \in \llparenthesis\queryTerm{2}\rrparenthesis \big \}$\\
  \bottomrule
\end{tabular}
}
\end{table}

Our system architecture is modular, allowing for components such as UCQT2RRA (currently based on the $\mRA$ system) to be replaced with other systems such as $\alpha$-extended RA~\cite{agrawal1988alpha}, $\beta$-RA~\cite{gomes2015beta}, \texttt{WAVEGUIDE}~\cite{yakovets2015waveguide}, \texttt{AVANTGRAPH}~\cite{leeuwen2022avantgraph} and \SIGMODchange{Datalog} based RRA systems~\cite{aref2015design,leone2006dlv,urbani2016vlog}. 
However, as discussed in~\cite{jachiet2020optimization,fejza2023mu}, the $\mRA$ system is superior in query optimisation capabilities.

The optimised RRA term is transformed into a concrete recursive SQL syntax for each RDBMS using the RRA2SQL component where the fixpoints are translated into~\emph{recursive view} 
statements\footnote{MySQL:~\texttt{CREATE OR REPLACE VIEW WITH RECURSIVE, SQLite:~\texttt{CREATE VIEW WITH RECURSIVE} \\ PostgreSQL:~\texttt{CREATE TEMPORARY RECURSIVE VIEW}}}.

\paragraph{\underline{Backend}} 
To evaluate our approach, we conducted experiments on one GDBMS and three RDBMS. Specifically, we used Neo4j, PostgreSQL, MySQL, and SQLite. Graph databases can be directly stored as property graphs on Neo4j. 
However, in the relational data model, we represent the nodes and edges of the graph database as relational tables.

\SIGMODchange{For instance, in the YAGO dataset's graph representation (as shown in Fig.~\ref{fig:YAGODatabase}), there are four graph edges labeled \texttt{isLocatedIn} connecting different pairs 
of nodes labeled as (\texttt{PROPERTY}, \texttt{CITY}), (\texttt{CITY}, \texttt{REGION}), and (\texttt{REGION}, \texttt{COUNTRY}) respectively. In the relational representation of the YAGO dataset, 
a table is created for each type of edge label, each with at least two columns $\srcCol$ and $\trgCol$, which are foreign keys pointing to the source and target nodes. Similarly, 
a table is created for each type of node label, with a specific column $\srcCol$ serving as a primary key and potentially many other columns for properties. Fig.~\ref{fig:FollowsRPG} 
shows two node tables (\emph{PROPERTY, CITY}) and two edge tables (\emph{owns, isLocatedIn}). Overall, each row in node or edge tables represents a node or an edge of the graph database.}

\begin{figure}[htbp]
    \centering
    \footnotesize 
     \begin{tikzpicture}
        [table/.style={matrix of nodes, nodes in empty cells, column sep=-0.35em, row sep=0}]
        \matrix(owns)[table, label=above: \textbf{owns},fill=yellow!5]{
        \hline     
        $\srcCol$ & $\trgCol$  \\
        \hline  
        $n_2$ & $n_1$\\    
        }; 
        \matrix(property) [table, label=above: \textbf{PROPERTY},fill=red!5,right =of owns] {
                 \hline
                 $\srcCol$ &  \texttt{address} \\
                 \hline  
                 $n_1$ &  7 Queen  \\
                       &   Street \\
                 };       
         \matrix(islocatedin) [table, label=above: \textbf{isLocatedIn}, fill=yellow!5,right =of property] {
             \hline 
             $\srcCol$ & $\trgCol$  \\
             \hline  
             $n_1$ & $n_6$\\
             $n_6$ & $n_5$\\
             $n_4$ & $n_5$\\
             $n_5$ & $n_7$\\
             };
 
             \matrix(area) [table, label=above: \textbf{CITY},fill=blue!5, right=of islocatedin] {
                 \hline
                 $\srcCol$ &  \texttt{name} \\
                 \hline  
                 $n_4$ & Elerslie \\
                 $n_6$ & Montbonnot\\ 
                 };     
     \end{tikzpicture}
     
     \caption{Relational representation of nodes and edges.}
     \label{fig:FollowsRPG}
 \end{figure}

\section{Experiments}\label{sec:experiments}
\nobreak 

We assess the impact on performance of the schema-based approach experimentally with a complete prototype implementation. Concretely, we try to answer the following questions:
(i) Does the schema-based approach improve the performance of query evaluation on real and synthetic benchmark datasets?
(ii) How does the approach behave on different database management systems?

\subsection{Experimental Setup}
\nobreak

\subsubsection{Datasets}
\nobreak 
We consider datasets of different nature: 
\begin{itemize}

  \item YAGO \cite{yago2019high}, which is a real knowledge graph. We use a preprocessed and cleaned version of the real-world dataset YAGO2s~\cite{yago2019high} in which only nodes with unique identifiers are present. 
  We split the set of RDF triples into multiple edge relations (tables), one for each predicate name. We create a node relation (table) for each node class. The dataset conforms to the YAGO schema constructed for this study. In other terms, the constructed schema does not exclude any existing triples from this YAGO dataset.  

  \item the Social Network Benchmark (SNB) interactive workload from the Linked Data Benchmark Council (LDBC) \cite{erling2015ldbc} which is a synthetic reference benchmark for property graphs. Specifically, we used the 
  LDBC-SNB dataset in CSV format provided from~\cite{cwi:snb}. 

\end{itemize}

\begin{table}[h]
  \caption{Summary of dataset characteristics.}
  \centering
  \label{tab:dataset}
  \resizebox{.6\linewidth}{!}{%
  \begin{tabular}{ccccrrr}
    \toprule
    \textbf{Name} & \textbf{SF} & \textbf{\#{}NR} & \textbf{\#{}ER} & \textbf{\#{}Nodes} & \textbf{\#{}Edges} & \textbf{Size}\\
    \midrule
    \textbf{YAGO}       &     N/A  &7                      &   88                 &    98,582 & 150,391,592  & 26 GB \\                         \hline 
                        &     0.1  &                       &                      &   416,311 &   2,034,983  & 0.3 GB\\
                        &     0.3  &                       &                      &  1,154,108 &   6,235,570  & 0.9 GB \\
                        &     1    &                       &                      &  3,966,203 &  23,056,025  & 3.3 GB \\
    \textbf{LDBC-SNB}   &     3    &       8               &   16                 & 11,407,480 &  69,482,982  & 9.9 GB \\
                        &     10   &                       &                      & 36,485,994 & 231,532,873  & 33 GB \\
                        &     30   &                       &                      & 88,789,972 & 541,279,759  & 82 GB \\
  \bottomrule
  
\end{tabular}
}
\end{table}

Tab.~\ref{tab:dataset} summarises the characteristics of the pre-processed datasets. %
The YAGO dataset has 98k nodes and 150M edges, with 7 node relations and 88 edge relations. 
LDBC-SNB uses 8 node relations (\textbf{NR}) and 16 edge relations (\textbf{ER}). 
LDBC-SNB has property graphs of varying sizes measured by scale factors (\textbf{SF}), ranging from 0.1 to 30. We consider 6 of them shown in Tab.~\ref{tab:dataset}. LDBC property graph with SF 0.1 has 416k nodes and 2M edges, 
SF 30 has 88M nodes and 541M edges~\footnote{LDBC-SNB contains large property graphs with scale factors 100, 1000}. The size column in Tab.~\ref{tab:dataset} shows the size on disk of the PostgreSQL database.

\subsubsection{Schemas}\label{subsubsec:preprocessing}
\nobreak

YAGO does not have a graph schema, therefore we developed a basic one, as illustrated in Fig.~\ref{fig:YAGOSchema}~\footnote{Due to space constraints, we only provide a small schema for YAGO in Fig.~\ref{fig:YAGOSchema}. 
The complete schema is available as supplementary material (see Sec.~\ref{subsubsec:Availability}).}, inspired by the SHACL semantic constraints from \cite{suchanek2023integrating} that specify the disjointness of certain classes.  %
The LDBC-SNB dataset comes with a pre-defined property graph schema~\cite{erling2015ldbc}.

\begin{table}[htbp]
  \caption{Queries for the LDBC-SNB Dataset.}
  
  \label{tab:LDBCQueries}
  \resizebox{.93\linewidth}{!}{%
  \begin{tabular}{clc}
    \toprule
    \textbf{Query Label} & \textbf{Path expressions as UCQT queries}  &   \textbf{Query Type} \\
    \midrule
    IC1 & \texttt{x1, x2 $\leftarrow$ (x1, \textcolor{violet}{\textbf{knows1..3\concatenation{}(isL $\cup$ (workAt $\cup$ studyAt)\concatenation{}isL)}}, x2)}                                                                                   & NQ \\
    IC2 & \texttt{x1, x2 $\leftarrow$ (x1, \textcolor{violet}{\textbf{knows\concatenation{}-hasC}}, x2)}                                                                                                                             & NQ \\
    IC6 & \texttt{x1, x2 $\leftarrow$ (x1, \textcolor{violet}{\textbf{knows1..2\concatenation{}(-hasC[hasT])[hasT]}}, x2)}                                                                                                          & NQ \\ 
    IC7 & \texttt{x1, x2 $\leftarrow$ (x1, \textcolor{violet}{\textbf{(-hasC\concatenation{}-likes) $\cup$ ((-hasC\concatenation{}-likes) $\conjunction{}$ knows)}}, x2)}                                                                 & NQ \\
    IC8 & \texttt{x1, x2 $\leftarrow$ (x1, \textcolor{violet}{\textbf{-hasC/-replyOf/hasC}}, x2)}                                                                                                                                & NQ \\
    IC9 & \texttt{x1, x2 $\leftarrow$ (x1, \textcolor{violet}{\textbf{knows1..2\concatenation{}-hasC}}, x2)}                                                                                                                        & NQ \\
    IC11 & \texttt{x1, x2 $\leftarrow$ (x1, \textcolor{violet}{\textbf{knows1..2\concatenation{}workAt\concatenation{}isL}}, x2)}                                                                                                   & NQ \\
    IC12 & \texttt{x1, x2 $\leftarrow$ (x1, \textcolor{violet}{\textbf{knows\concatenation{}-hasC\concatenation{}replyOf\concatenation{}hasT\concatenation{}hasTY\concatenation{}isSubC+}}, x2)}                                      & RQ \\
    IC13 & \texttt{x1, x2 $\leftarrow$ (x1, \textcolor{violet}{\textbf{knows+}}, x2)}                                                                                                                                                 & RQ \\
    IC14 & \texttt{x1, x2 $\leftarrow$ (x1, \textcolor{violet}{\textbf{(knows $\conjunction{}$ (-hasC/replyOf/hasC))+}}, x2)}                                                                                                         & RQ \\
    Y1    & \texttt{x1, x2 $\leftarrow$ (x1, \textcolor{violet}{\textbf{knows+\concatenation{}studyAt\concatenation{}isL+/isP+}}, x2)}                                                                                                  & RQ\\
    Y2   & \texttt{x1, x2 $\leftarrow$ (x1, \textcolor{violet}{\textbf{likes\concatenation{}hasC\concatenation{}knows+/isL+}}, x2)}                                                                                                   & RQ\\
    Y3   & \texttt{x1, x2 $\leftarrow$ (x1, \textcolor{violet}{\textbf{likes\concatenation{}replyOf+\concatenation{}isL+\concatenation{}isP+}}, x2)}                                                                                  & RQ\\
    Y4 & \texttt{x1, x2 $\leftarrow$ (x1, \textcolor{violet}{\textbf{hasM\concatenation{}(studyAt $\cup$ workAt)\concatenation{}isL+\concatenation{}isP+}}, x2)}                                                                             & RQ\\
    Y5 & \texttt{x1, x2 $\leftarrow$ (x1, \textcolor{violet}{\textbf{-hasM\concatenation{}([cof]hasT)\concatenation{}hasTY\concatenation{}isSubC+}}, x2)}                                                                             & RQ\\
    Y6 & \texttt{x1, x2 $\leftarrow$ (x1, \textcolor{violet}{\textbf{replyOf+\concatenation{}isL+\concatenation{}isP+}}, x2)}                                                                                                         & RQ\\
    Y7 & \texttt{x1, x2 $\leftarrow$ (x1, \textcolor{violet}{\textbf{hasMod\concatenation{}hasI\concatenation{}hasTY\concatenation{}isSubC+}}, x2)}                                                                                   & RQ\\
    Y8 & \texttt{x1, x2 $\leftarrow$ (x1, \textcolor{violet}{\textbf{([cof\concatenation{}hasC]hasM)\concatenation{}isL/isP+}}, x2)}                                                                                                  & RQ\\
    IS2 & \texttt{x1, x2 $\leftarrow$ (x1, \textcolor{violet}{\textbf{-hasC\concatenation{}replyOf+\concatenation{}hasC}}, x2)}                                                                                                   & RQ\\
    IS6 & \texttt{x1, x2 $\leftarrow$ (x1, \textcolor{violet}{\textbf{replyOf+\concatenation{}-cof\concatenation{}hasM}}, x2)}                                                                                                        & RQ\\
    IS7 & \texttt{x1, x2 $\leftarrow$ (x1, \textcolor{violet}{\textbf{(-hasC\concatenation{}replyOf\concatenation{}hasC) $\cup$ ((-hasC\concatenation{}replyOf\concatenation{}hasC) $\conjunction{}$ knows)}}, x2)}                           & NQ\\
    BI11 &\texttt{x1, x2 $\leftarrow$ (x1, \textcolor{violet}{\textbf{(([isL\concatenation{}isP]knows)[isL\concatenation{}isP]) $\conjunction{}$ (knows\concatenation{}([isL\concatenation{}isP]knows)))}} ,x2) }                  & NQ\\
    BI10 & \texttt{x1, x2 $\leftarrow$ (x1, \textcolor{violet}{\textbf{(knows+[isL\concatenation{}isP])\concatenation{}(-hasC[hasT])\concatenation{}hasT\concatenation{}hasTY}}, x2)}                                                 & RQ\\
    BI3 & \texttt{x1, x2 $\leftarrow$ (x1, \textcolor{violet}{\textbf{-isP\concatenation{}-isL\concatenation{}-hasMod\concatenation{}cof\concatenation{}-replyOf+\concatenation{}hasT\concatenation{}hasTY}}, x2)}                    & RQ\\
    BI9 & \texttt{x1, x2 $\leftarrow$ (x1, \textcolor{violet}{\textbf{replyOf+\concatenation{}hasC}}, x2)}                                                                                                                            & RQ\\
    BI20 & \texttt{x1, x2 $\leftarrow$ (x1, \textcolor{violet}{\textbf{(knows $\conjunction{}$ (studyAt\concatenation{}-studyAt))+}}, x2)}                                                                                             & RQ\\
    LSQB1 & \texttt{x1, x2 $\leftarrow$ (x1, \textcolor{violet}{\textbf{-isP\concatenation{}-isL\concatenation{}-hasM\concatenation{}cof\concatenation{}-replyOf+\concatenation{}hasT\concatenation{}hasTY}}, x2)}                    & RQ\\
    LSQB4 & \texttt{x1, x2 $\leftarrow$ (x1, \textcolor{violet}{\textbf{((likes[hasT])[-replyOf])\concatenation{}hasC}}, x2)}                                                                                                         & NQ\\
    LSQB5 & \texttt{x1, x2 $\leftarrow$ (x1, \textcolor{violet}{\textbf{-hasT\concatenation{}-replyOf\concatenation{}hasT}}, x2)}                                                                                                 & NQ\\
    LSQB6 & \texttt{x1, x2 $\leftarrow$ (x1, \textcolor{violet}{\textbf{knows\concatenation{}knows\concatenation{}hasI}}, x2)}                                                                                                     & NQ\\
    \bottomrule
                 
                        & \textbf{\texttt{isL=isLocatedIn,} \texttt{hasT=hasTag,} \texttt{isP=isPartOf,} \texttt{isSubC=isSubClassOf,} \texttt{hasI=hasInterest,}}                   & \\ 
                        & \textbf{\texttt{hasTY=hasType,} \texttt{coF=containerOf,} \texttt{hasMod=hasModerator,} \texttt{hasC=hasCreator,} \texttt{hasM=hasMember}}    & \\
  \end{tabular}
  }
\end{table}

\subsubsection{Queries}

We selected 30 queries from the LDBC-SNB workload~\cite{erling2015ldbc}, divided into two categories: non-recursive graph queries (\textbf{NQ}) and recursive graph queries (\textbf{RQ}). As shown in Tab.~\ref{tab:LDBCQueries} out of the 
30 queries, 12 are non-recursive, and 18 are recursive. 
Out of the 30 queries of Tab.~\ref{tab:LDBCQueries}, 22 are third-party queries. Queries labeled as (\textbf{IC} and \textbf{IS}) are extracted from the LDBC interactive workload, while queries labeled as (\textbf{BI}) are from the business intelligence workload, 
and queries labeled as (\textbf{LSQB}) are extracted from the large-scale subgraph query benchmark~\cite{mhedhbi2021lsqb}. Finally, we proposed queries labeled as (\textbf{Y}) as complementary queries, 
inspired by YAGO-style queries found in~\cite{jachiet2020optimization}.

We consider 18 queries on the YAGO dataset. These queries were previously used in studies such as~\cite{abul2017tasweet,gubichev2013sparqling,yakovets2015waveguide}. All the queries selected for the YAGO dataset are recursive graph queries. 

\subsubsection{Baseline}
\nobreak 

All experiments conducted using the schema-based approach are systematically compared to the initial (non-schema enriched) query considered as a baseline.

\subsubsection{Measures and timeout} 
\nobreak 
We set a 30-minute computation time limit for each query to evaluate schema-based and baseline approaches. If the computation exceeded 30 minutes, we stopped the process and deemed the query infeasible with the given approach, 
allowing us to measure the system's performance in a reasonable time frame. Each reported query execution time corresponds to the average of 5 runs. 

In experiments, some queries result in timeouts at a given scale factor but not with smaller dataset sizes. The primary focus is to measure the relative performance improvements brought by the proposed optimisations rather than conducting pure performance comparisons between systems. In particular, we want to check if such optimisations are capable of turning infeasible queries to feasible ones at some scale factors on the various systems.

\subsubsection{Hardware and software setup} 
\nobreak 
All experiments have been conducted by using a Macbook Air laptop with Apple M2 chip, 8 cores (4 performance and 4 efficiency), 24 GB of RAM and 1 TB hard disk. 
The main backend used for evaluation is PostgreSQL 15.4. We also provide performance comparisons with the graph database system Neo4j 5.21.2 community edition. 
We do not report on comparisons with MySQL and SQLite because both systems are significantly less efficient than PostgreSQL in executing queries for large LDBC datasets. No queries could be succesfully executed on MySQL and SQLite beyond scale factor 1.

\subsubsection{Availability}\label{subsubsec:Availability} 
The datasets, schemas, initial and rewritten queries are available\footnote{~\url{https://gitlab.inria.fr/tyrex-public/schema-graph-query.git}}.
 
\subsection{Query feasibility}

For the YAGO dataset, 17 queries out of 18 did run successfully with both approaches within the allowed timeout period. Only one query (query 7) timed out. Furthermore, during the process of schema enrichment and query factorization, query 7 automatically reverted to the 
initial query form since no schema-based optimization was found.  
The LDBC dataset consists of ten queries (\emph{IC2, IC6, IC7, IC9, IC13, Y7, BI11, BI9, BI20}, and~\emph{LSQB6}) that returned to their initial path expressions after going through the schema 
enrichment and query factorization process. Among these, six are non-recursive, while the remaining four are recursive. 

These queries highlight an advantage of our approach, which is to automatically avoid schema enrichment when it does not detect any significant performance gain. This allows us to prevent query performance degradation. 
For example, query 7 for YAGO and ten LDBC queries did not benefit from schema enrichment, but their performance was not made worse.

Out of ten LDBC queries, only three queries~\emph{Y7, BI11} and~\emph{BI20} could be executed on all six scale factors. 
Tab.~\ref{tab:feasibleQueries} summarizes the number and percentage of successful runs for the various scale factors (data size) on the LDBC dataset.

\begin{table}[h]
  \centering
  \caption{LDBC query feasibility across six scale factors (SF). }
  
  \label{tab:feasibleQueries}
  \resizebox{.7\linewidth}{!}{%
  \begin{tabular}{@{}ccccccccc@{}}
  \toprule
                      & \multicolumn{4}{c}{\textbf{Recursive}}                             & \multicolumn{4}{c}{\textbf{Non Recursive}}                         \\ 
                      \midrule
  \multirow{2}{*}{\textbf{SF}} & \multicolumn{2}{c}{\textbf{Baseline}} & \multicolumn{2}{c}{\textbf{Schema}} & \multicolumn{2}{c}{\textbf{Baseline}} & \multicolumn{2}{c}{\textbf{Schema}} \\
                      & \textbf{Count}             &  \textbf{Percentage}             & \textbf{Count}            &  \textbf{Percentage}            &   \textbf{Count}            &   \textbf{Percentage}             &   \textbf{Count}            &   \textbf{Percentage}           \\
  \hline                    
  \texttt{\textbf{0.1}}                 & 18           & \textbf{100}          & 18          & \textbf{100}          & 12           & \textbf{100}          & 12          & \textbf{100}          \\
  \texttt{\textbf{0.3}}                 & 16           & \textcolor{violet}{\textbf{88.9}}          & 18          & \textcolor{violet}{\textbf{100}}          & 9           & \textbf{75}         &  9          & \textbf{75}         \\
  \texttt{\textbf{1}}                   & 14           & \textcolor{violet}{\textbf{77.8}}          & 15          & \textcolor{violet}{\textbf{83.3}}         & 9           & \textbf{75}          &  9          & \textbf{75}         \\
  \texttt{\textbf{3}}                   & 11           & \textcolor{violet}{\textbf{61.1}}          & 13          & \textcolor{violet}{\textbf{72.2}}         & 7           & \textbf{58.3}          &  7          & \textbf{58.3}         \\
  \texttt{\textbf{10}}                  & 10           & \textcolor{violet}{\textbf{55.6}}          & 11          & \textcolor{violet}{\textbf{61.1}}         & 7           & \textbf{58.3}          &  7          & \textbf{58.3}         \\
  \texttt{\textbf{30}}                  & 5            & \textcolor{violet}{\textbf{27.8}}          & 7           & \textcolor{violet}{\textbf{38.9}}         & 4            & \textbf{33.3}          & 4           & \textbf{33.3}        \\ 
  \bottomrule
  \end{tabular}%
  }
  \end{table}

As shown in Tab.~\ref{tab:feasibleQueries}, the percentage of queries that did timeout increase with the scale factor for both approaches. For scale factor 30, the baseline approach could only execute 27.8\% of recursive graph queries, 
whereas the schema-based approach could execute 38.9\% recursive graph queries. Both approaches successfully executed  the same number of non-recursive graph queries across all scale factors.  
Using the schema-based approach, more recursive graph queries could be executed compared to the baseline across all scale factors, as shown in Tab.~\ref{tab:feasibleQueries}. 

We now evaluate the impact of the proposed approach on performance.

\subsection{Performance results on YAGO}\label{sec:performanceYAGO}

\begin{figure}[htbp]
  \centering
  \includegraphics[width=.75\linewidth,keepaspectratio]{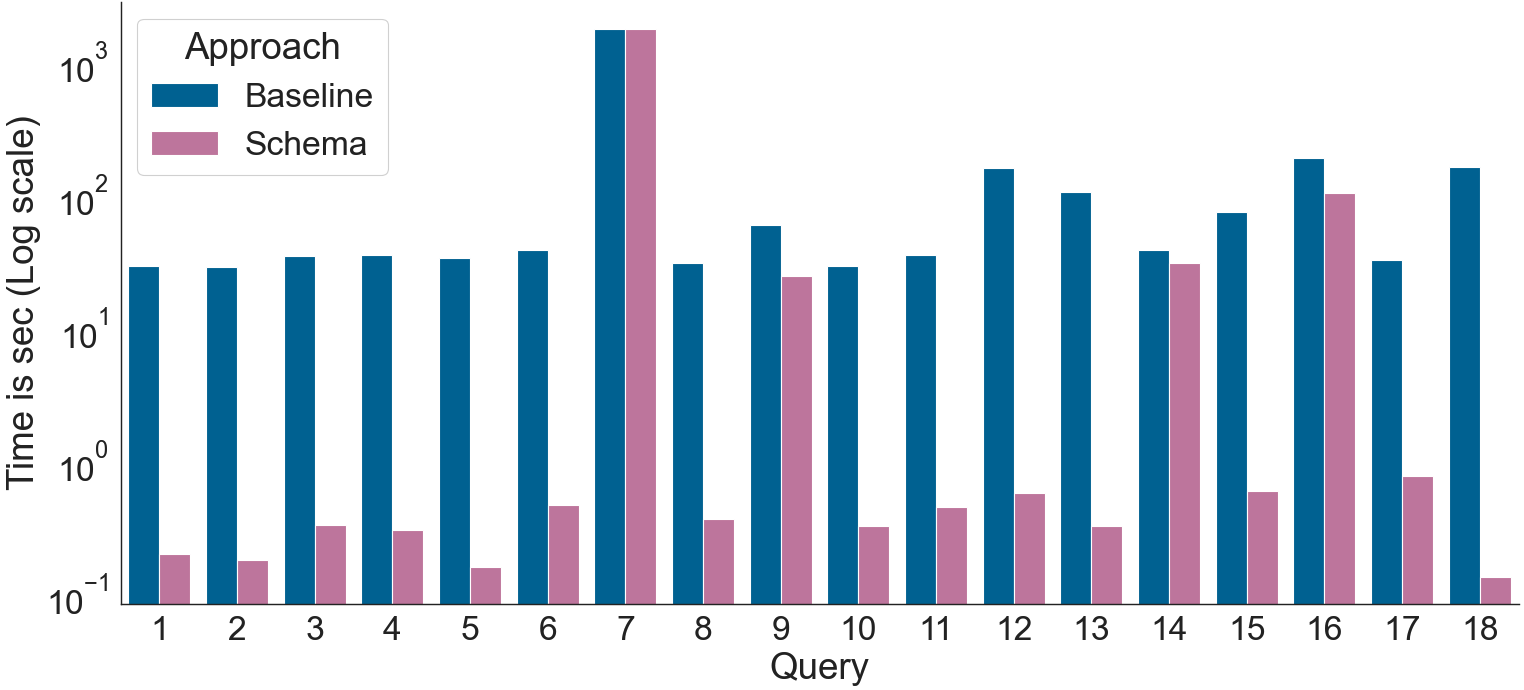}
  
  \caption{Query runtime for YAGO dataset.}
  \label{fig:yagobarplot}
\end{figure}

Fig.~\ref{fig:yagobarplot}  presents the evaluation times for the schema-based approach compared to the baseline. We observe that the schema-based approach outperforms the baseline for all YAGO queries; 
on average, YAGO queries run 6.1 times faster using the schema-based approach compared to the baseline. Notice that the time scale is logarithmic.

As mentioned in Sec.~\ref{sec:logicalQueryOpt}, the schema-enrichment approach enables the removal of 
the transitive closure operation from certain queries. Tab.~\ref{tab:fixedlengthpathstats} illustrates the replacement of the sub-path expression \texttt{isLocatedIn+} by fixed-length paths in YAGO queries.

\begin{table}[htbp]
  \centering
  \caption{Statistics on generated fixed-length paths}
  
  \resizebox{.45\linewidth}{!}{%
  \begin{tabular}{@{}ccccc@{}}
  \toprule
\textbf{YAGO Queries}                                                & \textcolor{violet}{\textbf{\#Paths}}            & \textcolor{violet}{\textbf{\texttt{Min}}}           & \textcolor{violet}{\textbf{\texttt{Avg}}}         & \textcolor{violet}{\textbf{\texttt{Max}}}          \\ 
  \midrule
\textcolor{violet}{\textbf{\texttt{1,2,3,4,5}}}                     & \textbf{1}                    & \textbf{2}             & \textbf{2}            & \textbf{2}            \\
\textcolor{violet}{\textbf{\texttt{12}}}                                & \textbf{2}                    & \textbf{1}             & \textbf{1.5}          & \textbf{2}            \\
\textcolor{violet}{\textbf{\texttt{6,8,10,11,14,15,16,17,18}}}          & \textbf{3}                    & \textbf{1}             & \textbf{2}            & \textbf{3}            \\
\textcolor{violet}{\textbf{\texttt{9}}}                                & \textbf{8}                    & \textbf{1}             & \textbf{2.5}          & \textbf{4}            \\
  \bottomrule
  \end{tabular}%
  }
  \label{tab:fixedlengthpathstats}
\end{table}

  Tab.~\ref{tab:fixedlengthpathstats} displays the total number of fixed length 
paths (\textbf{\#Paths}) along with minimum (\textbf{\texttt{Min}}), average (\textbf{\texttt{Avg}}) and maximum (\textbf{\texttt{Max}}) lengths of paths generated as replacement for transitive closure. Specifically, the transitive closure operation can be eliminated in 16 out of 18 queries for the YAGO dataset.

All (third-party) queries considered in these YAGO experiments happen to be recursive (\textbf{RQ}) graph queries. In order to further experimentally assess the impact of the approach on performance, 
we next investigate with graphs of different topology and varying sizes, in combination with recursive and non-recursive queries.

\subsection{Performance results on LDBC}\label{sec:performanceLDBC}
\nobreak 

\begin{figure}[htbp]
  \centering
  \includegraphics[width=.85\linewidth,keepaspectratio]{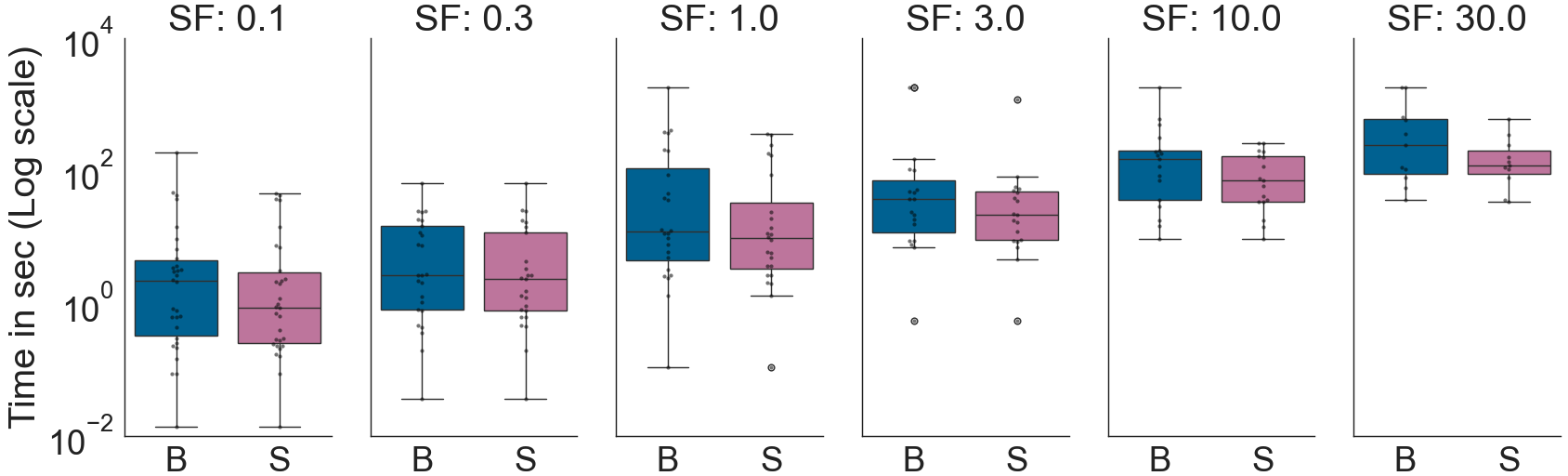}
  
  \caption{Box plot of query runtime for six scale factors (\texttt{SF=Scale Factor, B=Baseline, S=Schema})}
  \label{fig:BoxScale}
\end{figure}

We execute the 30 queries of Tab.~\ref{tab:LDBCQueries} for the six different scales of the LDBC-SNB datasets, for both the schema-based approach and the baseline. 
Therefore each query is run $6\times2=12$ times, which yields 360 query runs. Some queries timed out. 
Among the successful executions, times spent in evaluations can significantly vary from a query to another. In order to extract useful insights from  these measurements, 
we resort to statistical measures and aggregations on successful executions. 

\paragraph{\underline{Performance analysis when dataset size varies}}
We test the hypothesis that performance of query execution is improved by the schema-based approach as compared to the baseline. We conduct this test across all scale factors.
Fig.~\ref{fig:BoxScale}, presents a summary of statistics based on box plots for all the successful runs of the 30 queries of Tab.~\ref{tab:LDBCQueries} for the two approaches for graphs of increasing sizes. Running times reported on the $y$-axis use a logarithmic scale. 
This suggests that the schema-based approach is more efficient for executing queries, especially when the dataset size increases.
Notably, the performance of the schema-based approach improved after scale factor 0.3, as depicted in Fig.~\ref{fig:BoxScale}.

\paragraph{\underline{Performance analysis between recursive and non recursive queries}}
To further analyze the benefits of the schema-based approach, we categorize queries into recursive and non-recursive. Then, we collectively analyze both approaches across all six scale 
factors. Tab.~\ref{tab:fivepoint} compares schema-based and baseline (non-schema-based) approaches in query runtime. Here, \texttt{Count}, \texttt{Min}, 
\texttt{Q1}, \texttt{Q2}, \texttt{Q3}, \texttt{Max}, and \texttt{Mean} represent the total number of queries, minimum, 25$^{th}$ percentile, median, 75$^{th}$ percentile, maximum and mean query runtime in seconds, respectively.

\begin{table}[ht]
  \centering
  \caption{Query runtime summary statistics (in seconds)}
  
  \label{tab:fivepoint}
  \resizebox{.4\linewidth}{!}{%
  \begin{tabular}{@{}ccccc@{}}
  \toprule
                  & \multicolumn{2}{c}{\textbf{Recursive}} & \multicolumn{2}{c}{\textbf{Non Recursive}} \\ 
                  \midrule
                  & \textbf{Baseline}   & \textbf{Schema}  & \textbf{Baseline}     & \textbf{Schema}    \\
                  \hline 
  \texttt{\textbf{Count}}  & 78                                               & 78                                            & 48                                                 & 48                 \\
  \texttt{\textbf{Min}}    & \textbf{0.0137}                                  & \textbf{0.0137}                               & \textbf{0.087}                                     & \textbf{0.087}              \\
  \texttt{\textbf{Q1}}     & \textcolor{violet}{\textbf{2.839}}               & \textcolor{violet}{\textbf{1.968}}            & 1.211                                              & 1.271              \\
  \texttt{\textbf{Q2}} & \textcolor{violet}{\textbf{16.254}}              & \textcolor{violet}{\textbf{11.284}}           & \textcolor{violet}{\textbf{10.574}}                & \textcolor{violet}{\textbf{9.515}}              \\
  \texttt{\textbf{Q3}}     & \textcolor{violet}{\textbf{140.446}}             & \textcolor{violet}{\textbf{46.759}}           & \textcolor{violet}{\textbf{45.397}}                & \textcolor{violet}{\textbf{44.938}}             \\
  \texttt{\textbf{Max}}    & \textcolor{violet}{\textbf{1800.0}}                & \textcolor{violet}{\textbf{1171.876}}         & \textcolor{violet}{\textbf{360.887}}               & \textcolor{violet}{\textbf{353.044}}            \\ 
  \texttt{\textbf{Mean}}    & \textcolor{violet}{\textbf{213.282}}                & \textcolor{violet}{\textbf{65.244}}         & \textcolor{violet}{\textbf{46.727}}               & \textcolor{violet}{\textbf{45.398}}            \\ 
  \bottomrule
  \end{tabular}%
  }
  \end{table}

  As reported in Tab.~\ref{tab:fivepoint} comparing the schema-based approach and the baseline for recursive queries across all six scale factors, the results demonstrate that the 
  schema-based approach outperforms the baseline. Specifically, the schema-based approach is 3.26 times faster on average than the baseline for recursive queries. Both approaches exhibit comparable 
  performance when examining non-recursive queries. However, it is worth noticing that the schema-based approach generally demonstrates faster median, 75th percentile, maximum and mean query runtimes compared to the baseline.

  \begin{table}[h]
    \centering
    \caption{ Overall analysis of query runtime (in seconds)}
    
    \label{tab:fivepointOverall}
    \resizebox{.6\linewidth}{!}{%
    \begin{tabular}{cccccccc}
    \toprule
    &\textbf{\texttt{Count}} & \textbf{\texttt{Min}} & \textbf{\texttt{Q1}} & \textbf{\texttt{Median}}& \textbf{\texttt{Q3}} & \textbf{\texttt{Max}} &\textbf{\texttt{Mean}} \\
    \midrule   
    \textbf{Baseline} & 126 & \textbf{0.0137} & \textcolor{violet}{\textbf{2.6}}  & \textcolor{violet}{\textbf{14.45}} & \textcolor{violet}{\textbf{80.37}} & \textcolor{violet}{\textbf{1800.0}} &\textcolor{violet}{\textbf{149.833}}\\
    \textbf{Schema}   & 126 & \textbf{0.0137} & \textcolor{violet}{\textbf{1.38}} & \textcolor{violet}{\textbf{10.45}} & \textcolor{violet}{\textbf{46.76}} & \textcolor{violet}{\textbf{1171.87}}&\textcolor{violet}{\textbf{58.066}}\\
    \bottomrule
    \end{tabular}%
    }
    \end{table}

    Measurements in Tab.~\ref{tab:fivepointOverall} suggest that the schema-based approach consistently outperforms the baseline, with an average improvement of 2.58 times for feasible queries at all six scale factors.

  Experimental statistics show that the effect of the schema-enrichment process is even more significant for YAGO compared to LDBC.  First, YAGO queries provide more optimisation opportunities for transitive closure removal. In contrast, the transitive closure operation can only be removed in 5 out of the 30 LDBC queries. Second, YAGO queries also provide more opportunities for semi-join insertion as they use less branching and conjunction operators. Overall, 10 out of 30 LDBC queries returned to their original \ucqt{} form. In contrast, only 1 out of 18 YAGO queries, reverted to its initial form.  When the query reverts to its original form, the schema-based approach obviously has the same runtime than the baseline. This affects the overall performance impact and explains the more favorable aggregate statistics for YAGO queries.

\subsection{Evaluation on other database systems} \label{subsection:generalisability}
\nobreak

  We now conduct experiments to evaluate the schema-enrichment approach on PostgreSQL relational database system and Neo4j graph database system. We conducted experiments on the LDBC-SNB dataset, 
  focusing on~\emph{chain-shaped} queries~\cite{bonifati2020analytical}. These queries are expressed without branching and 
  conjunction operations and correspond to the formalism of the union of two-way conjunctive regular path queries (\uctworpq{}). The Cypher query language used in the Neo4j graph database only supports a restricted 
  form of \uctworpq{}~\cite{sharma2021practical}. In particular, only 15 out of 30 queries from the LDBC-SNB dataset shown in Tab.~\ref{tab:LDBCQueries} are expressible in Cypher and hence supported by Neo4j.

\begin{figure}[htbp]
  \centering
  \includegraphics[width=.85\linewidth,keepaspectratio]{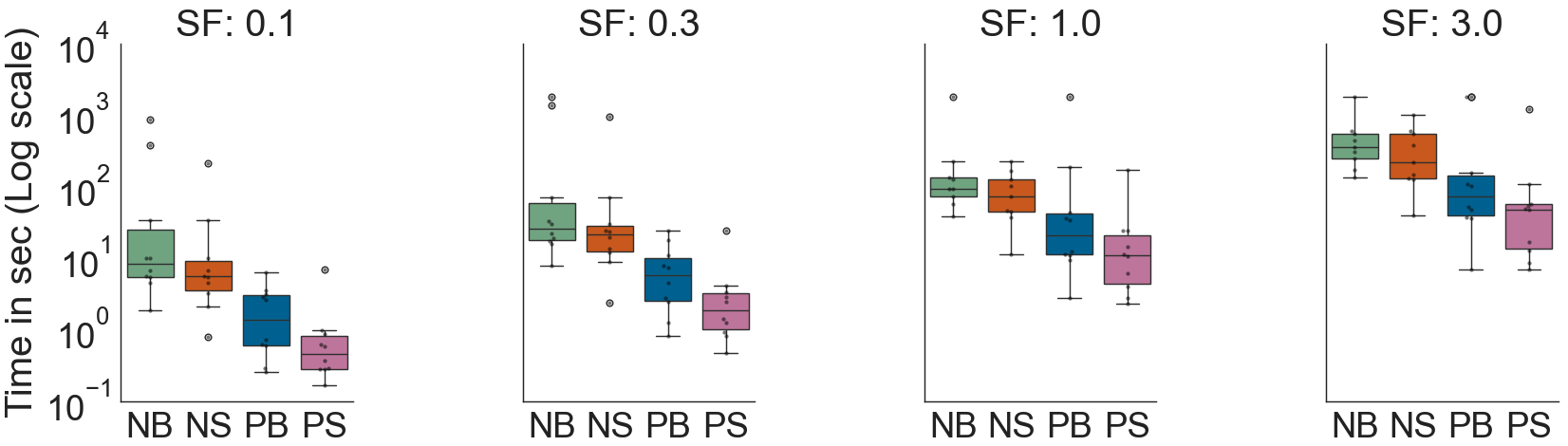}
  
  \caption{Query runtimes on Neo4j (N) and PostgreSQL (P) for LDBC-SNB. (\texttt{SF=Scale Factor, B=Baseline, S=Schema, PB=PostgreSQLBaseline})}
  \label{fig:allDBboxplotldbc}
\end{figure}

Fig.~\ref{fig:allDBboxplotldbc}, provides a comprehensive overview of the summary statistics for query runtimes using box plots specifically for the LDBC-SNB dataset at scale factors of 0.1, 0.3, 1, and 3. 
For scale factors 10, and 30, Neo4j could not complete the query evaluations within 30 minutes. Results shown in Fig.~\ref{fig:allDBboxplotldbc} suggest that the schema-based approach improves 
the performance on each individual system. Overall, the median query runtime of the schema-based approach is consistently better than the baseline across both database systems considered in this study. 
From our results we observed that PostgreSQL offers more scalability as it is capable of handling scale factors 10 and 30 (See Fig.~\ref{fig:BoxScale}), 
and yet its performance still benefits from the schema-based enrichment approach. These performance improvements are particularly important at scale factor 30. 

Measurements with Neo4j are not reported for YAGO queries as it is by far the most inefficient system for evaluating YAGO queries. This was also noticed by \cite{jachiet2020optimization}.

\paragraph{\underline{Plan-level impact of annotated path expressions}} To illustrate the plan-level impact of the schema enrichment process on annotated path expressions, 
we consider a query $\exQuery{1}$ and its schema-enriched version as query $\exQuery{2}$ in the LDBC dataset.

\begin{align*}
  \texttt{SRC, TRG} \leftarrow & \texttt{ (SRC, \textcolor{violet}{knows$\annconcat{}$workAt$\annconcat{}$isLocatedIn}, TRG)} & \exQuery{1}\\
  \texttt{SRC, TRG} \leftarrow & \texttt{ (SRC, \textcolor{violet}{knows$\annconcat{}$workAt$\annconcat{\texttt{Organisation}}$\texttt{isLocatedIn}}, TRG)} & \exQuery{2}\\
\end{align*}

The translated versions of queries $\exQuery{1}$ and $\exQuery{2}$ in SQL and Cypher are presented in Figures~\ref{listing:SQL} and~\ref{listing:Cypher}, respectively.

\lstset{language=CypherSQL, style=CypherSQL}
\begin{figure}[htbp]
 \footnotesize 
\centering
\begin{lstlisting}[mathescape, frame=single,framexleftmargin=8pt, rulecolor=\color{framegrey}]
 // SCHEMA-ENRICHED $(\exQuery{2})$                
 SELECT DISTINCT kw.Sr AS SRC, isLin.Tr AS TRG       
   FROM knows  AS kw
   JOIN workAt AS wr ON kw.Tr=wr.Sr 
   JOIN (SELECT loc.Sr AS Sr, loc.Tr AS Tr
           FROM (SELECT Sr FROM Organisation) AS org
           JOIN isLocatedIn AS loc ON loc.Sr=org.Sr         
        ) AS isLin ON wr.Tr=isLin.Sr;      
\end{lstlisting}
\vfill 
\begin{lstlisting}[mathescape, frame=single,framexleftmargin=8pt, rulecolor=\color{framegrey}]
 // BASELINE $(\exQuery{1})$                      
 SELECT DISTINCT kw.Sr AS SRC, isLin.Tr AS TRG        
   FROM knows  AS kw
   JOIN workAt AS wr ON kw.Tr=wr.Sr
   JOIN isLocatedIn AS isLin ON wr.Tr=isLin.Sr;
\end{lstlisting}

\caption{Schema-enriched and baseline queries in SQL}
\label{listing:SQL}
\end{figure}

In Fig.~\ref{listing:SQL}, the schema-enriched version of the SQL query contains an additional~\emph{semi-join} between the \texttt{Organisation} node relation and \texttt{isLocatedIn} 
edge relation (lines 5-7). The result of this~\emph{semi-join} is then combined with the \texttt{workAt} edge relation (line 8). On the other hand, in the baseline version of the query, 
additional node relation-based~\emph{semi-join} is not performed.

\lstset{language=CypherSQL, style=CypherSQL}
\begin{figure}[htbp]
 \footnotesize 
\begin{lstlisting}[mathescape, frame=single,framexleftmargin=8pt, rulecolor=\color{framegrey}] 
 // SCHEMA-ENRICHED $(\exQuery{2})$
 MATCH (SRC)-[:knows]->()-[:workAt]->(:Organisation)-[:isLocatedIn]->(TRG)
 RETURN DISTINCT SRC, TRG;
\end{lstlisting}
\vfill 
  \begin{lstlisting}[mathescape, frame=single,framexleftmargin=8pt, rulecolor=\color{framegrey}] 
 // BASELINE $(\exQuery{1})$
 MATCH (SRC)-[:knows]->()-[:workAt]->()-[:isLocatedIn]->(TRG)
 RETURN DISTINCT SRC, TRG;
\end{lstlisting}

\caption{Schema-enriched and baseline queries in Cypher}
\label{listing:Cypher}
\end{figure}

In Fig.~\ref{listing:Cypher}, the schema-enriched version of the Cypher query includes an extra node label in the graph pattern (line 2). The schema-enriched graph pattern 
indicates the selection of \texttt{isLocatedIn} labeled edges that start from nodes labeled as \texttt{Organisation} in the graph database. In contrast, the baseline version of the query in 
Cypher does not impose such constraints on the graph pattern.

\lstset{language=exeplan, style=exeplan}
\begin{figure}[htbp]
  \centering  
  \begin{minipage}[t]{\linewidth}
  \centering 
  \footnotesize 
  \begin{lstlisting}[mathescape, frame=single,framexleftmargin=8pt, rulecolor=\color{framegrey}]
 // SCHEMA-ENRICHED QUERY EXECUTION PLAN $(\exQuery{2})$  
  HashAggregate (cost = 215,265.50 rows = 2,085,899)
   Group Key: knows.Sr, islocatedin.Tr
   Planned Partitions: 32
    Hash Join (cost = 40,408.50 rows = 2,085,899)
     Hash Cond: (knows.Tr = workat.Sr)
      Seq Scan on knows (cost = 12,221.46 rows = 704,246)
      Hash (cost = 2,190.11 rows = 58,912)
       Hash Join (cost = 2,190.11 rows = 58,912)
        Hash Cond: (workat.Tr = Organisation.Sr)
         Seq Scan on workat (cost = 965.12 rows = 58,912)
         Hash (cost = 315.51 rows = 7,955)
          Merge Join (cost = 315.51 rows = 7,955)
           Merge Cond: (Organisation.Sr = islocatedin.Sr)
            Index Scan on Organisation(cost = 215.61 rows = 7,955)
            Index Scan on islocatedin(cost = 337,785.74 rows = 11,118,487)\end{lstlisting}
  \end{minipage}%
  \vfill 
  \begin{minipage}[t]{\linewidth}
  \centering 
   \footnotesize 
  \begin{lstlisting}[mathescape, frame=single,framexleftmargin=8pt, rulecolor=\color{framegrey}]
 // BASELINE QUERY EXECUTION PLAN $(\exQuery{1})$  
  HashAggregate (cost = 219,592.34 rows = 2,085,899)
   Group Key: knows.Sr, islocatedin.Tr
   Planned Partitions: 32
    Hash Join (cost = 44,735.33 rows = 2,085,899)
     Hash Cond: (knows.Tr = workat.Sr)
      Seq Scan on knows (cost = 12,221.46 rows = 704,246)
      Hash (cost = 6,516.95 rows = 58,912)
       Merge Join (cost = 6,516.95 rows = 58,912)
        Merge Cond: (islocatedin.Sr = workat.Tr)
         Index Scan on islocatedin (cost = 337,785.74 rows = 11,118,487)
         Sort (cost = 5,780.08 rows = 58,912)
          Sort Key: workat.Tr
           Seq Scan on workat (cost = 965.12 rows = 58,912)
  \end{lstlisting}
  \end{minipage}
  
  \caption{Execution plans for schema-enriched and baseline queries}
  \label{listing:exeplan}
  \end{figure}

In order to concretely illustrate the reduction of intermediate results enabled by the schema-enrichment process, Fig.~\ref{listing:exeplan} presents the schema-enriched and baseline query
execution plans for queries $\exQuery{2}$ and $\exQuery{1}$, annotated with costs and cardinalities as estimated by PostgreSQL. In Fig.~\ref{listing:exeplan}, the schema-enriched execution plan generated for query $\exQuery{2}$, indicates that the number of intermediate rows significantly decreased
after a~\emph{semi-join} is performed between the \texttt{Organisation} node relation and \texttt{isLocatedIn} edge relation (lines 12-16).
The \texttt{isLocatedIn} edge relation contains 11 million rows, however, after the~\emph{semi-join}, the number of rows reduced to approximately 8 thousand. 
In the schema-enriched execution plan, the result of the~\emph{semi-join} is combined with the \texttt{workAt} edge relation (lines 9-11), where the total number of estimated rows is 58,912 for an estimated cost of 2,190.11 (line 9). Conversely, in Fig.~\ref{listing:exeplan}, the baseline execution plan generated for query $\exQuery{1}$, the \texttt{isLocatedIn} edge relation does not undergo filtering. It is directly joined with the 
\texttt{workAt} edge relation (line 10), for an estimated higher cost of 6,516.95 (line 9). The same number of rows
is estimated for both the baseline and the schema-enriched execution plans (line 9).  

Overall, as shown in Fig.~\ref{listing:exeplan}, the schema-enriched plan (line 2) has a smaller estimated cost compared to the baseline plan (line 2) for the same number of rows in the final result set.

\section{Related Work}\label{sec:RelatedWork}
\nobreak 

Query rewriting techniques that rely on the structural information of the schema have been proposed~\cite{chakravarthy1990logic}. Authors of~\cite{abiteboul1997regular,buneman1998path,buneman2000query,vielle1989recursive} 
emphasize the schema's significance in the query rewriting. Additionally, authors in~\cite{buneman1997adding,fernandez1998optimizing} suggest that knowing the structure of the database can help reduce the query search space, 
leading to a significant improvement in overall query runtime. We now briefly discuss schema-based query rewriting techniques proposed for databases following different data models. 

\paragraph{\underline{Semi-structured databases}} Schema-based query rewriting techniques for queries over~\emph{semi-structured databases} is proposed in~\cite{abiteboul1997regular,buneman1998path,buneman2000query}. 
In semi-structured databases, schemas are expressed as~\emph{path constraints} that are regular expressions defined over edge labels~\cite{buneman2000path}. The use of path constraints to rewrite queries expressed 
in the formalism of \ctworpq{} and \uctworpq{} is proposed in~\cite{deutsch2001optimization,calvanese1998decidability,florescu1998query}. However, a significant limitation of using path constraints as a schema language 
is that the schema database consistency can only be established when path constraints are defined without using the Kleene star operator~\cite{buneman2000query}. 

\paragraph{\underline{XML databases}} Authors in~\cite{neven2006complexity,wood2003containment} propose rewriting XPath queries using the structural information stored in the schema of XML databases. For XML databases, schemas 
are expressed as~\emph{Document Type Definition} (DTDs), which are essentially regular expressions defined over the edge labels~\cite{martens2022towards}. 
Authors~\cite{miklau2004containment,benedikt2008xpath,lakshmanan2004testing,geerts2005satisfiability,hidders2004satisfiability,czerwinski2015almost} study the~\emph{satisfiability} of XPath queries 
in presence of DTDs and suggest that satisfiability is undecidable for XPath queries in presence of recursive DTDs (DTDs defined using Kleene star operator). Authors in~\cite{bressan2005accelerating,marian2003projecting} 
propose the creation of smaller XML documents by using the XPath query and structure of XML schema. Authors in~\cite{benzaken2006type} suggest a type system-based approach for XML document pruning; however, they highlight 
that creating pruned XML documents can be time-consuming and may take a similar amount of time as running the original query.

\paragraph{\underline{\SIGMODchange{Datalog}}} Schema-based query rewriting of~\emph{conjunctive queries} (\cq{}) and~\emph{union of conjunctive queries} (\ucq{}) in the presence of schema expressed as \SIGMODchange{Datalog} rules is proposed in~\cite{meier2010semantic}. 
Furthermore, authors~\cite{barcelo2020semantic} suggest that query containment of \cq{} and \ucq{} is decidable in the presence of schema expressed as non-recursive \SIGMODchange{Datalog}. Authors~\cite{bancilhon1985magic,vielle1989recursive,de2008type,schafer2010type} 
study the~\emph{containment} of \SIGMODchange{Datalog} queries in the presence of schema and suggest that the containment is decidable in the presence of non-recursive schemas. 
Regarding graph query language formalism, authors in~\cite{bonatti2004decidability,calvanese1998decidability,calvanese2005decidable} express the formalisms of \ctworpq{} and \uctworpq{} as 
\SIGMODchange{Datalog} queries and suggest that the query containment is decidable in the presence of non-recursive \SIGMODchange{Datalog} schema. 

\paragraph{\underline{Graph databases}} Graph query and schema languages have been extensively researched in the context of  knowledge graphs~\cite{wiharja2020schema,hu2022type,bellomarini2022model} with RDFS and OWL in particular~\cite{henriksson2004static,lu2007sor},  
and the standard query language SPARQL~\cite{bonifati2020sharql,haller2023query,chekol2018sparql,chekol2012sparql}. Query rewriting based on the structure of the~\emph{Resource Description Framework} (RDF) 
has been proposed in~\cite{kim2017type,abbas2017optimising} for non-recursive SPARQL queries. 

For property graphs, significant research and standardisation effort are still in progress \cite{sakr2021future,bonifati2021graph,sakr2021ensuring,bonifati2022special,bonifati2023threshold,rost2024seraph,deutsch2022graph,francis2023researcher,angles2018g}. 
So far, the lack of standard schema language hinders the application of schema-based query rewriting techniques. 
Contemporary research in property graph data models mainly focuses on property graph schema design and inference techniques~\cite{alotaibi2021property,lbath2021schema,ragab2020large,sakr2021future,bonifati2023quest,bonifati2022discopg,bonifati2019schema}. 
The design of our graph schema is motivated by existing works such as~\emph{PG-Schema} and~\emph{PG-Keys} proposed in~\cite{angles2023pg,angles2021pg}.

In~\cite{colazzo2015typing}, a type inference approach is proposed to rewrite queries expressed in the formalisms of \rpq{} and \tworpq{} using a recursive graph schema. However, the type inference system presented in~\cite{colazzo2015typing} 
is neither sound nor complete for graph query language 
formalisms containing branching and conjunction operations. Additionally, the type inference system is only explored theoretically. 

Compared to the state-of-the-art, our approach can take a $\ucqt{}$ query and a graph schema as input and generate a $\ucqt{}$ query as output, which is enhanced with structural schema information. The rewritten schema-aware query preserves the 
initial query semantics under the graph schema. 
We experimentally demonstrate that for~\emph{recursive} $\ucqt{}$s, the generated $\ucqt{}$ can be executed more efficiently using our approach.

\section{Conclusion and Perspectives}\label{sec:Conclusion}
\nobreak 

We propose a graph query rewriting method aimed at leveraging the structural information of a graph schema. The purpose is to enrich an initial query with schema information in order to improve query performance. To this end, we introduce inference rules capable of incorporating schema constraints in the path expressions contained in a query. The difficulty comes from the fact that pushing constraints through regular path expressions is complex. Our approach automates a process which would be tricky and error-prone if done manually by a developer. The soundness and completeness of the approach are proved to ensure that the initial query semantics is preserved under the schema. Furthermore, our approach is opportunistic in that it is applied only when performance gains are expected. %
We conducted extensive experiments on real and synthetic datasets, with several database systems. Experimental results show that schema-based query rewriting provides significant performance gains for recursive path queries in graphs. %
A perspective for further work is to extend the approach by considering queries with aggregrations.

\section{Acknowledgments}
This research was partially funded by the ANR \emph{GraphRec} project under grant number ``ANR-23-CE23-0010''.

\bibliographystyle{unsrtnat}
\bibliography{references}  






\end{document}